\theoremstyle{plain}
\newtheorem{definition}{Definition}
\newtheorem{thm}{Theorem}
\newtheorem{prop}{Proposition}
\newtheorem{axm}{Axiom}
\theoremstyle{definition}
\theoremstyle{remark}
\title{Random Non-Expected Utility: Non-Uniqueness}
\author{Yi-Hsuan Lin\thanks{Institute of Economics, Academia Sinica; \href{mailto:yihsuanl@econ.sinica.edu.tw}{yihsuanl@econ.sinica.edu.tw}. I started this project during my PhD studies at Boston University. I am deeply grateful to Larry Epstein for his advice and encouragement. I also thank Jay Lu for his helpful comments and for raising the question addressed in Section 3.2.}}
\date{August 2020\footnote{First draft: October 2018. \href{https://drive.google.com/file/d/1yBVWLw_Q4JcPSvTYs1cLvEKADIiiomkX/view?usp=sharing}{{\it Please click here for the latest version}}}}
\begin{document}
\maketitle\begin{NoHyper}
\abstract{In random expected utility \citep*{Gul:2006}, the distribution of preferences is uniquely recoverable from random choice. This paper shows through two examples that such uniqueness fails in general if risk preferences are random but do not conform to expected utility theory. In the first, non-uniqueness obtains even if all preferences are confined to the betweenness class \citep*{Dekel:1986} and are suitably monotone. The second example illustrates random choice behavior consistent with random expected utility that is also consistent with random non-expected utility. On the other hand, we find that if risk preferences conform to weighted utility theory  \citep*{Chew:1983} and are monotone in first-order stochastic dominance, random choice again uniquely identifies the distribution of preferences. Finally, we argue that, depending on the domain of risk preferences, uniqueness may be restored if joint distributions of choice across a limited number of feasible sets are available.}
\par\ \\
{\bf Keywords:} random choice, random utility/preference, non-expected utility, identification
\section{Introduction}
A classic model of random choice is random utility theory. It can be interpreted as follows. In a heterogeneous population, suppose each individual maximizes her preference.  Given any feasible set $D$ of alternatives, as different individual might make different choice, the choice behavior of this population is summarized by a distribution over $D$, which is determined by the distribution of preferences in the population.\footnote{Random utility theory can also model the stochastic behavior of a single agent. It hypothesizes that the agent's preference is random according to a fixed distribution. Facing a feasible set of alternatives, she first perceives the realized preference and then makes a rational choice. Hence, from ex-ante point of view, her choice appears random.} In the context of choice under risk, a special case of the model is random expected utility (REU). All individuals in the population are expected utility maximizers, but their risk attitudes are not identical. Under REU, the distribution of preferences is uniquely recoverable from random choice \citep*{Gul:2006}. In other words, when an analyst observes only the choice frequencies for this population then, under the assumption that each individual's preference conforms to expected utility theory, she is able to identify a unique distribution of preferences consistent with the observed behavior.
\par
The focus on expected utility preferences is natural as a first step, but is not completely satisfactory in light of its well-known descriptive failures, such as the Allais paradox. One might suspect that the observed random choice of lotteries could be rationalized by random non-expected utility but not by REU. On the other hand, in an abstract choice setting, it is well-known that two distinct random preferences can rationalize the same random choice behavior.\footnote{See \cite{Barbera:1986} for an example of random choice that can be rationalized by more than one random utility.} Hence if risk preferences are unrestricted at all, random choice of lotteries does not identify the underlying distribution of preferences uniquely.
\par
Nonetheless, even if a modeler wants to deviate from expected utility theory, she does not need to embrace all kinds of risk preferences. Many classes of non-expected utility preferences have certain structures for tractability and are consistent with many stylized behaviors violating expected utility theory. Therefore, a natural question is if the uniqueness result of REU remains true when we make a small deviation from expected utility? Moreover, if a class of non-expected utility does not yield unique identification, how can we restore uniqueness by enriching the observable behavior? This paper aims to respond to these questions.
\par
Firstly, we show by example that random choice may not identify a unique distribution of preferences when risk preferences are random and restricted but do not conform to expected utility theory. In particular, we have three findings for non-uniqueness: (i) non-uniqueness obtains even if risk preferences are confined to the betweenness (implicit expected utility) class, developed by \cite{Dekel:1986} and \citet{Chew:1983,Chew:1989}, and are all monotonic with respect to first-order stochastic dominance; (ii) non-uniqueness obtains even if risk preferences are confined to the weighted utility class \citep*{Chew:1983}; (iii) random choice may be rationalized by both REU and random non-expected utility.  Thus, non-unique identification seems to be a generic problem for random non-expected utility models.
\par
The third finding above demonstrates a subtle difference between the expected utility hypothesis at the individual and population levels. {\it Even if the observed choice frequencies from a population can be rationalized by random expected utility, it is still possible that no individual in the population is an expected utility maximizer.}  Therefore, while violating any axiom of REU implies that not all individuals are expected utility maximizers, consistency with REU does not validate the expected utility hypothesis at the individual level either.
\par
We then deepen our analysis under weighted utility theory and provide some positive results on restoring unique identification. While random weighted utility is not uniquely identified from random choice in general, we may regain uniqueness by requiring all preferences to be monotonic with respect to first-order stochastic dominance. We provide a formal argument for the case of three prizes. In some choice settings, for example, when the prizes are monetary, such monotonicity is normative appealing. Therefore, if unique identification is desirable for a modeler, random weighted utility with monotonicity in stochastic dominance may be a good choice for random non-expected utility models. 
\par
Without such monotonicity, we can still uniquely identify random weighted utility if we suitably enrich the observable behavior. In the case of three prizes, we show that the distribution of weighted utility preferences is uniquely recoverable from joint distribution of choice across three binary menus. Note that under the classic notion of random choice, only the distribution of choice from each feasible set is observable. {\it The joint distribution of choice across any two sets is not.} Under REU, random choice implicitly reveals all joint choice probabilities and so pins down a unique distribution of preferences. Once we deviate from expected utility, random choice no longer discloses such information. Thus, non-uniqueness obtains in general. If we can observe joint choice frequencies across choice sets, we will be able to improve identification. Our finding suggests that when risk preferences are sufficiently restricted, joint choice distributions across a small number of menus suffice for unique identification. 
\par
To the best of our knowledge, this is the first study to address the identification of random non-expected utility from a decision-theoretic perspective.\footnote{\cite{Lu:2020} studies a random utility model where each individual has an ambiguity-averse preference over Anscombe-Aumann acts. Thus, the population's choice of acts is not consistent with random subjective expected utility. Nonetheless, he considers the set of all lotteries over acts as the choice domain and assumes that each individual is still an expected-utility maximizer when evaluating lotteries. Hence his identification result essentially follows from the uniqueness of REU.} The paper does not aim to dive deep into any particular model of random non-expected utility. Instead, it is mainly illustrative. We argue that some classes of non-expected utility preferences do not admit unique identification of random utility, and some do. We suggest that allowing joint choice probabilities as the observable can help to regain uniqueness. To convey these points easily, we conduct all the analyses in a three-prize setting so that a risk preference can be visualized on a two-dimensional plane. While risk preferences in this paper all belong to the betweenness class or a subclass, we acknowledge the existence of many other types of non-expected utility widely used in economics but not considered here yet. Future work may focus on a particular class of risk preferences and perform a more detailed analysis. 
\par
The rest of this paper is organized as follows. Section 2 introduces random implicit expected utility (RIEU). Section 3 provides two examples of two distinct RIEUs that induce the same random choice. Hence the distribution of risk preferences is not uniquely identified. Section 4 points out a class of non-expected utility that yields unique identification of random utility. It also discusses the reason for general non-uniqueness and demonstrates that uniqueness may be restored if choice data are suitably enriched. In the appendix, we point out three behavioral properties of RIEU.

\section{Random Utility Model}
\subsection{Basic modeling approach of random choice and random utility}
We first review the general random utility model. There is a universal space of choice alternatives $X$. A choice set, or called a menu, is a finite subset of $X$. Let $\mathcal{D}$ denote the collection of all menus.
\par
Choice from a menu is modeled as a random set. In particular, for any $A,D\in\mathcal{D}$, we denote by $\rho_D(A)$ the probability that $A$ is the set of all optimal alternatives in $D$. To ensure feasibility, we require that if $\rho_D(A)>0$ then $A$ is a nonempty subset of $D$. Let $\Pi$ be the set of all simple probability measures over the class of all finite subsets of $X$. The observable choice behavior is summarized by a random choice correspondence.\footnote{We consider multi-valued random choice to avoid dealing with ties in choice. Literature often studies single-valued random choice. Namely, the choice out of $A$ is summarized by a distribution over $A$, but not over the class of subsets of $A$. With single-valued random choice, a random utility model must assume that ties occur with zero probability, or impose a tie-breaking rule. Otherwise, the connection between the observable and the model would be loose.}
\begin{definition}
A random choice correspondence (RCC) is a function $\rho:\mathcal{D}\rightarrow\Pi$ with $\rho_D(\{A\in\mathcal{D}:\emptyset\neq A\subset D\})=1$ for all $D\in\mathcal{D}$.
\end{definition}
\par
A preference relation $\succsim$ is a complete and transitive binary relation over $X$. For each menu $D\in\mathcal{D}$, let $M(D,\succsim)$ denote the set of all optimal options in $D$ according to $\succsim$; that is, $$M(D,\succsim)\coloneqq\{x\in D:x\succsim y\ \forall\ y\in D\}.$$
\par
Fix a set of preference relations $\Omega$. Let $N(D,A)$ denote the set of preferences in $\Omega$ under which $A$ is the set of all optimal options in $D$; that is, $$N(D,A)\coloneqq\{\succsim\in\Omega:A=M(D,\succsim)\}\subset\Omega.$$ Let \begin{equation}\mathcal{C}\coloneqq\{N(D,A):A,D\in\mathcal{D}\},\label{C}\end{equation} and let $\mathcal{F}(\mathcal{C})$ denote the smallest field that contains every element of $\mathcal{C}$. 
\par
A random utility, or called a random preference, is a finitely additive probability measure $\mu$ on $(\Omega,\mathcal{F}(\mathcal{C}))$. Say that $\rho$ is rationalized by $\mu$ if $\rho_D(A)$ equals the probability under $\mu$ that $A$ is the set of all optimal alternatives in $D$.
\begin{definition}
Random choice correspondence $\rho$ is rationalized by random utility $\mu$ if, for all $D,A\in\mathcal{D}$, $$\rho_D(A)=\mu(N(D,A)).$$
\end{definition}

\subsection{Random utility models for choice under risk and identification problem}
This paper exclusively focuses on random choice under risk. There is a finite set of prizes denoted $W=\{w_1,w_2,\cdots,w_{N+1}\}$ for $N\geq 1$. The objects of choice are lotteries over $W$. Let $\Delta\coloneqq\{p\in\mathbb{R}_+^N:\sum_{n=1}^Np^n\leq 1\}$ be the set of all lotteries. For each $p\in\Delta$, its $n$th coordinate $p^n$ is the probability of winning the prize $w_n$, for all $n=1,\cdots,N$, and $p^{N+1}\coloneqq1-\sum_{n=1}^Np^n$ is the probability of winning the prize $w_{N+1}$. A lottery assigning probability one to the prize $w\in W$ is denoted by $w$.
\par
Let $X=\Delta$, and then we can define a random risk preference $\mu$ as before. The identification problem concerns if random choice of lotteries $\rho$ identifies a unique distribution of risk preferences $\mu$. That is, if $\rho$ is rationalizable by both $\mu$ and $\mu'$, is it necessary that $\mu=\mu'$? The answer to this question crucially depends on the domain of preferences $\Omega$.
\par
In order to facilitate the analysis, we assume that there are only three prizes ($N=2$) throughout the paper. Thus a lottery is identified as a point in a probability simplex in $\mathbb{R}^2$, and a risk preference can be described by its indifference map. For an expected-utility preference, the indifference sets are all linear and parallel to each other. A well-known generalization of expected utility is betweenness preference, which relaxes the parallelism of indifference curves. 
\begin{definition}[\citealp{Dekel:1986}]
A binary relation $\succsim$ over $\Delta$ is called a betweenness preference if it satisfies: 
\begin{enumerate}
\item $\succsim$ is complete and transitive.
\item There exist best and worst elements in $\Delta$ which are the sure prizes in $W$.
\item If $p\succ q\succ r$, then there exists $\alpha\in(0,1)$ such that $\alpha p+(1-\alpha)q\sim r$.
\item (Betweenness)\\If $p\succ q$, then $p\succ\alpha p+(1-\alpha)q\succ q$ for all $\alpha\in(0,1)$.\\ If $p\sim q$, then $p\sim\alpha p+(1-\alpha)q\sim q$ for all $\alpha\in(0,1)$.
\end{enumerate}
\end{definition}
A notable feature of such preference is that for any $p\in\Delta$, the indifference set $\{q\in\Delta:q\sim p\}$ is the intersection of a hyperplane and $\Delta$. Moreover, this hyperplane divides $\Delta$ into the upper and the lower contour sets of $p$ (i.e.\ $\{q\in\Delta:q\succ p\}$ and $\{q\in\Delta:q\prec p\}$). However, two indifference sets need not be parallel (i.e.\ their corresponding hyperplanes can intersect outside the simplex). If we strengthen Betweenness to Independence: $$\forall\alpha\in(0,1],\ \forall p,q,r,\in\Delta,\ p\succsim q\Leftrightarrow\alpha p+(1-\alpha)r\succsim\alpha q+(1-\alpha)r,$$ then $\succsim$ becomes an expected-utility preference whose indifference sets are all parallel to each other. See Figure~\ref{BP} as examples of betweenness preference and expected-utility preference in the Marschak-Machina triangle.
\par
\begin{figure}
\centering
\begin{tikzpicture}[line cap=round,line join=round,>=triangle 45,x=1.4cm,y=1.2cm]
\clip(-0.5,-0.5) rectangle (8.7,3.4);
\draw [line width=1pt,color=darkgray] (0,0)-- (2,3);
\draw [line width=1pt,color=darkgray] (2,3)-- (4,0);
\draw [line width=1pt,color=darkgray] (4,0)-- (0,0);
\draw [line width=1pt,color=darkgray] (4.5,0)-- (6.5,3);
\draw [line width=1pt,color=darkgray] (6.5,3)-- (8.5,0);
\draw [line width=1pt,color=darkgray] (8.5,0)-- (4.5,0);
\draw [line width=1pt,color=red] (0.16813131103702372,-0.2398955201720981)-- (3.5721554620877836,2.059631432931349);
\draw [line width=1pt,color=red] (4.685705915755984,-0.22631563659077825)-- (8.089730066806737,2.0732113165126638);
\draw [line width=1pt,color=red] (0.16813131103702372,-0.2398955201720981)-- (2.6668298899998164,2.783891890601726);
\draw [line width=1pt,color=red] (4.2330431297120015,0.36214598526640296)-- (7.637067280762754,2.6616729383698434);
\draw [line width=1pt,color=red] (5.265114281892286,-0.6880316783556436)-- (8.669138432943026,1.611495274747797);
\draw [line width=1pt,color=red] (0.5212082841513314,-0.1674694744050603)-- (3.9071259237603333,1.1452526051224978);
\draw [line width=1pt,color=red] (3.9071259237603333,1.1452526051224978)-- (2.087421523863517,-0.34853458882265453);
\draw [->,line width=1pt,color=gray] (3.02628725813012,1.1551342349840245) -- (2.3960605925140726,2.0242467841884769);
\draw [->,line width=1pt,color=gray] (7.483990307648455,1.1325010956818249) -- (6.877422174349515,2.0016136448862774);
\draw [line width=1pt,color=red] (1.1368296731711498,2.231643291628064)-- (2.2956464054437493,2.8382114249270045);
\draw (0,3) node[anchor=north west] {(a)};
\draw (4.5,3) node[anchor=north west] {(b)};
\begin{scriptsize}
\draw (1.3,1.1) node[anchor=north west] {$I(p)$};
\draw (5.8,1.1) node[anchor=north west] {$I(p)$};
\draw [fill=blue] (0,0) circle (2.5pt);
\draw[color=blue] (-0.15,-0.15) node {$w$};
\draw [fill=blue] (4,0) circle (2.5pt);
\draw[color=blue] (4,-0.2) node {$\underline{w}$};
\draw [fill=blue] (2,3) circle (2.5pt);
\draw[color=blue] (2,3.2) node {$\bar{w}$};
\draw [fill=blue] (4.5,0) circle (2.5pt);
\draw[color=blue] (4.35,-0.15) node {$w$};
\draw [fill=blue] (8.5,0) circle (2.5pt);
\draw[color=blue] (8.5,-0.2) node {$\underline{w}$};
\draw [fill=blue] (6.5,3) circle (2.5pt);
\draw[color=blue] (6.5,3.2) node {$\bar{w}$};
\draw [fill=blue] (2,1) circle (2.5pt);
\draw[color=blue] (2,1.2) node {$p$};
\draw [fill=blue] (6.5,1) circle (2.5pt);
\draw[color=blue] (6.5,1.2) node {$p$};
\end{scriptsize}
\end{tikzpicture}
\caption{}
\medskip
\begin{minipage}{0.75\textwidth}
{\scriptsize A stylized betweenness preference is depicted in (a). The indifference set containing $p$, denoted $I(p)$, is the intersection of a straight line and the simplex. Two indifference sets can be non-parallel. A stylized expected-utility preference is depicted in (b), where all indifference sets are parallel. (Arrows indicate the direction of increasing preference.)\par}
\end{minipage}
\label{BP}
\end{figure}
\par
Each betweenness preference has an implicit expected utility representation which we state in Appendix A. Thus, if $\Omega$ is the set of all betweenness preferences, we call $\mu$ random implicit expected utility (RIEU).
\par
All random preferences considered in this paper are RIEUs. We can further restrict $\Omega$ by considering special cases of betweenness preferences. We will also consider monotonicity in first-order stochastic dominance. It requires an exogenous and fixed ranking of prizes. And then for every possible preference $\succsim$, $p\succsim q$ if  $p$ is obtained from $q$ by shifting probability mass from a worse prize to a better prize (i.e.\ $p$ first-order stochastic dominates $q$).

\section{Non-Uniqueness of RIEU}
In general, an RCC may be rationalized by more than one RIEU. We show this through two examples. In particular, the first example shows that non-uniqueness obtains even if risk preferences are all monotonic with respect to first-order stochastic dominance. The second example illustrates random choice consistent with random expected utility that can also be rationalized by more than one random weighted utility, a special case of RIEU.
\subsection{Example 1}
We define two RIEUs $\mu$ and $\mu'$ as follows. Under $\mu$, the realized preference is either $\succsim_1$ or $\succsim_2$ with equal probability. Figure~\ref{RIEU1} depicts their indifference maps. Under $\mu'$, the realized preference is either $\succsim'_1$ or $\succsim'_2$ with equal probability. Figure~\ref{RIEU2} depicts their indifference maps.

\begin{figure}
\centering
\begin{tikzpicture}[line cap=round,line join=round,>=triangle 45,x=3.5cm,y=3.5cm]
\clip(-0.6,-0.6) rectangle (3.1,1.25);
\draw [line width=2pt,color=darkgray] (0,0)-- (1,0);
\draw [line width=2pt,color=darkgray] (1,0)-- (0,1);
\draw [line width=2pt,color=darkgray] (0,1)-- (0,0);
\draw [line width=1pt,dashed,color=red] (-0.5,-0.5)-- (1,1);
\draw [line width=1pt,dashed,color=red] (-0.5,-0.5)-- (1.15,0.05);
\draw [line width=1pt,dashed,color=red] (-0.5,-0.5)-- (0.05,1.15);
\draw [line width=1pt,dashed,color=red] (-0.5,-0.5)-- (0.875,0.325);
\draw [line width=1pt,dashed,color=red] (-0.5,-0.5)-- (0.325,0.875);
\draw [->,line width=1pt,color=gray]  (-0.05,-0.35) -- (-0.35,-0.05);
\draw [line width=2pt,color=darkgray] (2,0)-- (3,0);
\draw [line width=2pt,color=darkgray] (3,0)-- (2,1);
\draw [line width=2pt,color=darkgray] (2,1)-- (2,0);
\draw [line width=1pt,dashed,color=red] (1.5,-0.5)-- (3,1);
\draw [line width=1pt,dashed,color=red] (3,1)-- (3,-0.2);
\draw [line width=1pt,dashed,color=red] (3,1)-- (2.6,-0.2);
\draw [line width=1pt,dashed,color=red] (3,1)-- (1.8,0.6);
\draw [line width=1pt,dashed,color=red] (3,1)-- (1.8,1);
\draw [->,line width=1pt,color=gray] (3,0.6) -- (2.6,1);
\draw [line width=1pt,color=red] (0,0)-- (0.5,0.5);
\draw [line width=1pt,color=red] (0.33333,0)-- (0.75,0.25);
\draw [line width=1pt,color=red] (0,0.33333)-- (0.25,0.75);
\draw [line width=1pt,color=red] (2,0)-- (2.5,0.5);
\draw [line width=1pt,color=red] (2.66666,0)-- (2.75,0.25);
\draw [line width=1pt,color=red] (2,0.66666)-- (2.25,0.75);
\draw (-0.5,1) node[anchor=north west] {(a)};
\draw (1.5,1) node[anchor=north west] {(b)};
\begin{scriptsize}
\draw [fill=blue] (1,0) circle (2pt);
\draw[color=blue] (1.05,0.05) node {$w_1$};
\draw [fill=blue] (0,0) circle (2pt);
\draw[color=blue] (-0.07,0) node {$w_3$};
\draw [fill=blue] (0,1) circle (2pt);
\draw[color=blue] (-0.05,1.05) node {$w_2$};
\draw [fill=gray] (-0.5,-0.5) circle (2pt);
\draw[color=darkgray] (-0.54,-0.54) node {$x_1$};
\draw [fill=gray] (1,1) circle (2pt);
\draw[color=darkgray] (1.04,1.04) node {$x_2$};
\draw [fill=blue] (0.75,0.25) circle (2pt);
\draw[color=blue] (0.77,0.33) node {$q'$};
\draw [fill=blue] (0.5,0) circle (2pt);
\draw[color=blue] (0.5,-0.06) node {$p'$};
\draw [fill=blue] (0.25,0.75) circle (2pt);
\draw[color=blue] (0.33,0.77) node {$q$};
\draw [fill=blue] (0,0.5) circle (2pt);
\draw[color=blue] (-0.06,0.5) node {$p$};
\draw [fill=blue] (3,0) circle (2pt);
\draw[color=blue] (3.05,0.05) node {$w_1$};
\draw [fill=blue] (2,0) circle (2pt);
\draw[color=blue] (1.93,0) node {$w_3$};
\draw [fill=blue] (2,1) circle (2pt);
\draw[color=blue] (1.95,1.05) node {$w_2$};
\draw [fill=gray] (1.5,-0.5) circle (2pt);
\draw[color=darkgray] (1.46,-0.54) node {$x_1$};
\draw [fill=gray] (3,1) circle (2pt);
\draw[color=darkgray] (3.04,1.04) node {$x_2$};
\draw [fill=blue] (2.75,0.25) circle (2pt);
\draw[color=blue] (2.82,0.27) node {$q'$};
\draw [fill=blue] (2.5,0) circle (2pt);
\draw[color=blue] (2.5,-0.06) node {$p'$};
\draw [fill=blue] (2.25,0.75) circle (2pt);
\draw[color=blue] (2.27,0.82) node {$q$};
\draw [fill=blue] (2,0.5) circle (2pt);
\draw[color=blue] (1.94,0.5) node {$p$};
\end{scriptsize}
\end{tikzpicture}
\caption{}
\medskip 
\begin{minipage}{0.75\textwidth}
{\scriptsize (a): The preference $\succsim_1$ is represented by the weighted utility function $V_1$. The worst lottery is $w_1$ and the best lottery is $w_2$. All indifference curves intersect at $x_1=(-\frac{1}{2},-\frac{1}{2})$.\\(b): The preference $\succsim_2$ is represented by the weighted utility function $V_2$. The worst lottery is $w_1$ and the best lottery is $w_2$. All indifference curves intersect at $x_2=(1,1)$.\par}
\end{minipage}
\label{RIEU1}

\centering
\begin{tikzpicture}[line cap=round,line join=round,>=triangle 45,x=3.5cm,y=3.5cm]
\clip(-0.6,-0.6) rectangle (3.1,1.25);
\draw [line width=2pt,color=darkgray] (0,0)-- (1,0);
\draw [line width=2pt,color=darkgray] (1,0)-- (0,1);
\draw [line width=2pt,color=darkgray] (0,1)-- (0,0);
\draw [line width=1pt,dashed,color=red] (-0.5,-0.5)-- (1,1);
\draw [line width=1pt,dashed,color=red] (-0.5,-0.5)-- (0.05,1.15);
\draw [line width=1pt,dashed,color=red] (-0.5,-0.5)-- (0.325,0.875);
\draw [->,line width=1pt,color=gray] (-0.2,-0.2) -- (-0.35,-0.05);
\draw [line width=2pt,color=darkgray] (2,0)-- (3,0);
\draw [line width=2pt,color=darkgray] (3,0)-- (2,1);
\draw [line width=2pt,color=darkgray] (2,1)-- (2,0);
\draw [line width=1pt,dashed,color=red] (1.5,-0.5)-- (3,1);
\draw [line width=1pt,dashed,color=red] (3,1)-- (1.8,0.6);
\draw [line width=1pt,dashed,color=red] (3,1)-- (1.8,1);
\draw [->,line width=1pt,color=gray] (2.8,0.8) -- (2.6,1);
\draw [line width=1pt,dashed,color=red] (1,1)-- (1,-0.2);
\draw [line width=1pt,dashed,color=red] (1,1)-- (0.6,-0.2);
\draw [line width=1pt,dashed,color=red] (1.5,-0.5)-- (3.15,0.05);
\draw [line width=1pt,dashed,color=red] (1.5,-0.5)-- (2.875,0.325);
\draw [->,line width=1pt,color=gray] (1.95,-0.35) -- (1.8,-0.2);
\draw [->,line width=1pt,color=gray] (1,0.6) -- (0.8,0.8);
\draw [line width=1pt,color=red] (0,0)-- (0.5,0.5);
\draw [line width=1pt,color=red] (2.33333,0)-- (2.75,0.25);
\draw [line width=1pt,color=red] (0,0.33333)-- (0.25,0.75);
\draw [line width=1pt,color=red] (2,0)-- (2.5,0.5);
\draw [line width=1pt,color=red] (0.66666,0)-- (0.75,0.25);
\draw [line width=1pt,color=red] (2,0.66666)-- (2.25,0.75);
\draw (-0.5,1) node[anchor=north west] {(a)};
\draw (1.5,1) node[anchor=north west] {(b)};
\begin{scriptsize}
\draw [fill=blue] (1,0) circle (2pt);
\draw[color=blue] (1.05,0.05) node {$w_1$};
\draw [fill=blue] (0,0) circle (2pt);
\draw[color=blue] (-0.07,0) node {$w_3$};
\draw [fill=blue] (0,1) circle (2pt);
\draw[color=blue] (-0.05,1.05) node {$w_2$};
\draw [fill=gray] (-0.5,-0.5) circle (2pt);
\draw[color=darkgray] (-0.54,-0.54) node {$x_1$};
\draw [fill=gray] (1,1) circle (2pt);
\draw[color=darkgray] (1.04,1.04) node {$x_2$};
\draw [fill=blue] (0.75,0.25) circle (2pt);
\draw[color=blue] (0.82,0.27) node {$q'$};
\draw [fill=blue] (0.5,0) circle (2pt);
\draw[color=blue] (0.5,-0.06) node {$p'$};
\draw [fill=blue] (0.25,0.75) circle (2pt);
\draw[color=blue] (0.33,0.77) node {$q$};
\draw [fill=blue] (0,0.5) circle (2pt);
\draw[color=blue] (-0.06,0.5) node {$p$};
\draw [fill=blue] (3,0) circle (2pt);
\draw[color=blue] (3.05,0.05) node {$w_1$};
\draw [fill=blue] (2,0) circle (2pt);
\draw[color=blue] (1.93,0) node {$w_3$};
\draw [fill=blue] (2,1) circle (2pt);
\draw[color=blue] (1.95,1.05) node {$w_2$};
\draw [fill=gray] (1.5,-0.5) circle (2pt);
\draw[color=darkgray] (1.46,-0.54) node {$x_1$};
\draw [fill=gray] (3,1) circle (2pt);
\draw[color=darkgray] (3.04,1.04) node {$x_2$};
\draw [fill=blue] (2.75,0.25) circle (2pt);
\draw[color=blue] (2.77,0.33) node {$q'$};
\draw [fill=blue] (2.5,0) circle (2pt);
\draw[color=blue] (2.5,-0.06) node {$p'$};
\draw [fill=blue] (2.25,0.75) circle (2pt);
\draw[color=blue] (2.27,0.82) node {$q$};
\draw [fill=blue] (2,0.5) circle (2pt);
\draw[color=blue] (1.94,0.5) node {$p$};
\end{scriptsize}
\end{tikzpicture}
\caption{}
\medskip
\begin{minipage}{0.75\textwidth}
{\scriptsize (a): The preference $\succsim'_1$ is represented by the utility function $V'_1$. The worst lottery is $w_1$ and the best lottery is $w_2$. For lotteries which are better than $w_3$, their indifference curves intersect at $x_1=(-\frac{1}{2},-\frac{1}{2})$. For lotteries which are worse than $w_3$, their indifference curves intersect at $x_2=(1,1)$.\\(b): The preference $\succsim'_2$ is represented by the utility function $V'_2$. The worst lottery is $w_1$ and the best lottery is $w_2$. For lotteries which are better than $w_3$, their indifference curves intersect at $x_2=(1,1)$. For lotteries which are worse than $w_3$, their indifference curves intersect at $x_1=(-\frac{1}{2},-\frac{1}{2})$.\par}
\end{minipage}
\label{RIEU2}
\end{figure}
\par
Note that these four preferences have identical indifference set of $w_3$. Above that set, $\succsim_1$ and $\succsim'_1$ have the same indifference map, and so do $\succsim_2$ and $\succsim'_2$. Below that set, $\succsim_1$ and $\succsim'_2$ have the same indifference map, and so do $\succsim_2$ and $\succsim'_1$.
\par
Below we provide numerical representations of $\succsim_i$ and $\succsim'_i$, for $i\in\{1,2\}$. In particular, $\succsim_1$ and $\succsim_2$ both follow {\it weighted utility theory} \citep*{Chew:1983}, which is a special case of betweenness preference. Weighted utility preferences are represented by the function $$V(p)=\frac{\sum_{n=1}^{N+1}p^ng(w_n)u(w_n)}{\sum_{n=1}^{N+1}p^ng(w_n)},$$ where $u(\cdot)$ and $g(\cdot)$ are real-valued functions defined on $W$, and $g$ is non-zero and nonnegative (or nonpositive).
\par
We have assumed $N=2$. Let $u(\cdot)$ be such that $u(w_1)=0$, $u(w_2)=1$, and $u(w_3)=\frac{1}{2}$. Let $g_1(\cdot)$ be such that $g_1(w_1)=g_1(w_2)=1$ and $g_1(w_3)=\frac{1}{2}$, and let $g_2(\cdot)$ be such that $g_2(w_1)=g_2(w_2)=1$ and $g_2(w_3)=2$. Let $V_i(\cdot)$ be the weighted utility function defined by $u$ and $g_i$ for $i\in\{1,2\}$; that is, $$V_1(p)=\frac{p^2+[(1-p^1-p^2)\times\frac{1}{2}\times\frac{1}{2}]}{p^1+p^2+[(1-p^1-p^2)\times\frac{1}{2}]}\mbox{ and }V_2(p)=\frac{p^2+[(1-p^1-p^2)\times2\times\frac{1}{2}]}{p^1+p^2+[(1-p^1-p^2)\times2]}.$$
\par
Define utility function $V'_i(\cdot)$ for $i\in\{1,2\}$ such that $$V'_1(p)=\left\{\begin{array}{ll}V_1(p)&\mbox{if }V_1(p)\geq\frac{1}{2},\\V_2(p)&\mbox{otherwise,}\end{array}\right.$$ and $$V'_2(p)=\left\{\begin{array}{ll}V_2(p)&\mbox{if }V_2(p)\geq\frac{1}{2},\\V_1(p)&\mbox{otherwise.}\end{array}\right.$$
\par
For $i\in\{1,2\}$, $V_i$ represents $\succsim_i$, and $V'_i$ represents $\succsim'_i$. Obviously, these four preferences all are betweenness preferences.\footnote{In fact, $\succsim'_1$ and $\succsim'_2$ both follow {\it semi-weighted utility} \citep*{Chew:1989}. A key geometric feature of such preference is that, there are at most two points outside the Marschak-Machina triangle at which two indifference curves can intersect. Our example illustrates that non-uniqueness obtains even if we further confine risk preferences to the class of semi-weighted utility.} They also agree on the ranking of prizes: $w_1$ is the worst prize, and $w_2$ is the best prize. Moreover, if we define first-order stochastic dominance based on this ranking of prizes, then under all these preferences, $p$ is better than $q$ whenever $p$ dominates $q$.\footnote{In Figures \ref{RIEU1} and \ref{RIEU2}, preferences increase when we shift a lottery toward northwest. Therefore, they are monotonic with respect to first-order stochastic dominance.}
\par
RCC $\rho$ is rationalized by $\mu$ if and only if it is rationalized by $\mu'$. To see this, suppose that lottery $p$ is such that $V_1(p)\geq\frac{1}{2}$. Then $V_2(p)\geq\frac{1}{2}$. For any other lottery $q$, $p\succsim_1q\Leftrightarrow p\succsim'_1q$ and $p\succsim_2q\Leftrightarrow p\succsim'_2q$. Similarly, if $V_1(p)<\frac{1}{2}$, then $p\succsim_1q\Leftrightarrow p\succsim'_2q$ and $p\succsim_2q\Leftrightarrow p\succsim'_1q$. Since $\mu(\succsim_i)=\mu'(\succsim'_j)$ for all $i,j\in\{1,2\}$, the probability that $p$ is better than $q$ is the same under $\mu$ and $\mu'$. It is straightforward to extend the argument to show that, for any menu $D$ and $A\subset D$, the probability that $A$ is the set of all optimal lotteries in $D$ is the same under $\mu$ and $\mu'$.
\par
Therefore, RCC induced by $\mu$ can be rationalized by at least two RIEUs. We conclude that, under RIEU, random choice may not identify a unique distribution of preferences. It is true even if we impose monotonicity in stochastic dominance.
\begin{prop}
There exists a random choice correspondence rationalizable by more than one random implicit expected utility.
\end{prop}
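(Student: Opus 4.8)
The plan is to verify that the two random utilities already built in Example~1 do the job: set $\mu=\tfrac12\delta_{\succsim_1}+\tfrac12\delta_{\succsim_2}$ and $\mu'=\tfrac12\delta_{\succsim'_1}+\tfrac12\delta_{\succsim'_2}$, the uniform two-point measures on the four preferences above. These are legitimate RIEUs: each preference is a betweenness preference (indeed $\succsim_1,\succsim_2$ are weighted utility and $\succsim'_1,\succsim'_2$ are semi-weighted utility), so $\Omega$ contains all four and the point masses are finitely additive probabilities on $(\Omega,\mathcal F(\mathcal C))$. The only geometric input I need is the fact recorded just before the statement: since $u(w_3)=\tfrac12$, the level line $V_i(p)=\tfrac12$ collapses to $p^1=p^2$ for both $i$ irrespective of $g_i(w_3)$, so all four preferences share the $w_3$-indifference set, and the two regions it bounds, $U=\{p:V_1(p)\ge\tfrac12\}$ and $L=\{p:V_1(p)<\tfrac12\}$, are common to all four.

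First I would establish that $\mu$ and $\mu'$ rationalize the \emph{same} RCC by matching maximizers region by region. On $U$ one has $V'_1=V_1$ and $V'_2=V_2$, while on $L$ one has $V'_1=V_2$ and $V'_2=V_1$; moreover each of the four functions is $\ge\tfrac12$ exactly on $U$ and $<\tfrac12$ exactly on $L$. Fix a menu $D$. If $D\cap U\neq\emptyset$, the maximum of every one of the four utilities over $D$ is at least $\tfrac12$ and is therefore attained inside $U$ (points of $L$ score below $\tfrac12$), where the primed and unprimed functions coincide pointwise; hence $M(D,\succsim'_1)=M(D,\succsim_1)$ and $M(D,\succsim'_2)=M(D,\succsim_2)$. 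If instead $D\subseteq L$, then $V'_1=V_2$ and $V'_2=V_1$ throughout $D$, so $M(D,\succsim'_1)=M(D,\succsim_2)$ and $M(D,\succsim'_2)=M(D,\succsim_1)$. Either way the unordered pair $\{M(D,\succsim'_1),M(D,\succsim'_2)\}$ equals $\{M(D,\succsim_1),M(D,\succsim_2)\}$, and since each preference carries mass $\tfrac12$,
\[
\mu(N(D,A))=\tfrac12\!\!\sum_{i\in\{1,2\}}\!\!\mathbbm{1}\{A=M(D,\succsim_i)\}=\tfrac12\!\!\sum_{i\in\{1,2\}}\!\!\mathbbm{1}\{A=M(D,\succsim'_i)\}=\mu'(N(D,A))
\]
for every $D$ and $A$, which is exactly the statement that both measures rationalize one common $\rho$.

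It remains to check that $\mu\neq\mu'$ as measures on $(\Omega,\mathcal F(\mathcal C))$, and this is the delicate step. The displayed identity shows the two measures agree on every single-menu event $N(D,A)$, so they cannot be separated by the choice data $\rho$ itself. They are nonetheless distinct because their supports $\{\succsim_1,\succsim_2\}$ and $\{\succsim'_1,\succsim'_2\}$ are disjoint — each cross pair disagrees on some menu, e.g.\ $\succsim_1\neq\succsim'_1$ since they differ throughout $L$ — and any two distinct betweenness preferences choose differently on some binary menu, hence are separated by an event of $\mathcal F(\mathcal C)$. Intersecting finitely many such separating events isolates $\succsim_1$ from both $\succsim'_1$ and $\succsim'_2$ inside $\mathcal F(\mathcal C)$, producing an event $E$ with $\mu(E)\ge\tfrac12$ and $\mu'(E)=0$. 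Concretely $E$ is a \emph{joint} choice event, such as ``the optimal set on some menu $D_U\subseteq U$ is $M(D_U,\succsim_1)$ and the optimal set on some menu $D_L\subseteq L$ is $M(D_L,\succsim_1)$,'' with $D_U,D_L$ chosen so that $\succsim_1$ and $\succsim_2$ disagree on each; such menus exist because $V_1$ and $V_2$ induce different indifference maps off the line $p^1=p^2$. Thus $\mu\neq\mu'$.

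The main obstacle is exactly this last point, and it is conceptual rather than computational: because $\mu$ and $\mu'$ are forced to agree on every single-menu event, distinctness cannot be read off $\rho$ and must instead be extracted from the richer field $\mathcal F(\mathcal C)$, which also contains intersections of choice events across different menus. Verifying that $\mathcal F(\mathcal C)$ genuinely separates the two supports — and recognizing the separating event as a joint-choice event invisible to $\rho$ — is the heart of the argument and previews the role that joint distributions of choice will play later in the paper.
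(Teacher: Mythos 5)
Your proposal is correct and follows essentially the same route as the paper: it uses the paper's own Example~1 measures, fills in the menu-level verification that the paper leaves as ``straightforward to extend,'' and establishes $\mu\neq\mu'$ via a joint-choice event across a menu in $U$ and a menu in $L$, which is exactly the separating event (e.g.\ $\{p,q\}$ and $\{p',q'\}$ with probabilities $\tfrac12$ versus $0$) that the paper exhibits in Section~4.2. No gaps; your region-by-region matching of maximizers is a clean way to make the paper's sketch rigorous.
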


\subsection{Example 2}
We will review in Section 4 that a random choice correspondence cannot be rationalized by two different random expected utilities (REU). However, it might be rationalizable by REU and also by random non-expected utility. The following example illustrates. In particular, it shows that random expected utility and random weighted utility (RWU) can induce identical random choice.
\par
When there are only three prizes, each weighted utility preference \citep*{Chew:1983} is characterized by (i) a point outside the Marschak-Machina triangle at which all indifference curves intersect, and (ii) the direction of increasing preference (either clockwise or counterclockwise).\footnote{Here, we exclude expected utility from the class of weighted utility instead of viewing it as a special case. That is, we mean ``strict'' weighted utility.} 
\par
Construct RWU $\nu_1$ as follows (see Figure~\ref{RIEU3}). Fix a circle surrounding the triangle. Let all possible intersection points of indifference curves be distributed uniformly on the circle. Moreover, conditional on any intersection point, let the direction of increasing preference be clockwise or counterclockwise with equal probability. 
\par
Figure~\ref{RIEU4} illustrates the implication of $\nu_1$ for random choice. Suppose that $\rho$ is rationalized by $\nu_1$. Take any $p,q,r\in\Delta$. Taking $p$ as the vertex, let $\alpha$ be the angle (in degrees) formed by these lotteries  ($\measuredangle qpr=\alpha^\circ$). It can be shown that $\rho_{\{p,q,r\}}(p)=\frac{1}{2}\left(1-\frac{\alpha}{180}\right)$. 
\par
Let $\nu_2$ be a uniform distribution over expected utility preferences. Then, under $\nu_2$, the probability that $p$ is optimal in $\{p,q,r\}$ is also $\frac{1}{2}\left(1-\frac{\alpha}{180}\right)$. In fact, for any $p\in\Delta$, $\nu_1$ and $\nu_2$ induce the identical distribution over the class of lower contour sets $\{p'\in\Delta:p\succsim p'\}$. Therefore, $\nu_1$ and $\nu_2$ induce the same random choice. 
\par
Note that when constructing $\nu_1$, the circle surrounding the triangle is chosen arbitrarily. By choosing a different circle, we can construct another RWU that rationalizes the same $\rho$. Thus, non-uniqueness obtains even if all preferences are confined to the (strict) weighted utility class.
\begin{prop}
There exists a random choice correspondence rationalizable by more than one random weighted utility.
There exists a random choice correspondence rationalizable by both random expected utility and random weighted utility.
\end{prop}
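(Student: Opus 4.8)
The plan is to prove both statements at once from the single example, by showing that the only feature of a random (weighted or expected) utility that the RCC $\rho$ can detect is, for each $p\in\Delta$, the distribution of the random lower contour set $L(p)\coloneqq\{p'\in\Delta:p\succsim p'\}$. First I would dispose of ties: under $\nu_1$ the event $p\sim q$ (for $p\neq q$) forces the intersection point $x$ to lie on the line through $p$ and $q$, which meets the fixed circle in only two points and hence is a null event; under $\nu_2$ it forces the expected-utility gradient to be orthogonal to $p-q$, again null. Consequently, for every finite menu $D$, ties are $\rho$-null, so $\rho_D(A)=0$ whenever $|A|\geq 2$ and $\rho_D(\{p\})=\Pr\big(D\subseteq L(p)\big)$. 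Thus it suffices to show that, for each fixed $p$, the random set $L(p)$ has the same law under the two measures being compared, since that law determines all the finite-dimensional probabilities $\Pr(q_1,\dots,q_k\in L(p))$ and hence all of $\rho$.

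The key geometric step is to realize $L(p)$ as a random oriented half-plane through $p$. For a weighted-utility preference with intersection point $x$ on the circle and chosen direction of increasing preference $d$, the indifference curve through $p$ is the chord $px$, so $L(p)$ is one of the two half-planes bounded by the line $px$, with $d\in\{\mathrm{cw},\mathrm{ccw}\}$ selecting which. Parametrizing oriented half-planes through $p$ by the angle $\omega\in[0,360)$ of their inward normal, the event $\{q\in L(p)\}$ corresponds to a semicircular arc of $\omega$ (those normals within $90^\circ$ of the direction from $p$ to $q$), and $\{q,r\in L(p)\}$ to the intersection of two such semicircles, an arc of length $180-\alpha$ where $\measuredangle qpr=\alpha^\circ$. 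I would then compute $\Pr(p\succsim q,\ p\succsim r)$ under $\nu_1$ directly: extending the chords $pq$ and $pr$ to the circle, the set of $x$ making $p$ optimal is exactly an arc, and applying the intersecting-chords angle theorem, $\alpha$ equals half the sum of the two arcs intercepted by $\measuredangle qpr$ and its vertical angle, so those arcs together form a fraction $\alpha/180$ of the circle. Hence $\Pr(p\succsim q,\ p\succsim r)=\tfrac12\big(1-\tfrac{\alpha}{180}\big)$, and — crucially — the chord theorem delivers this value purely in terms of $\alpha$, with no reference to the radius or center.

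With the two-point probabilities in hand I would upgrade to the full law of $L(p)$ by a uniqueness-of-measure argument on the orientation circle. Under $\nu_2$ the gradient is uniform, so $\omega$ is uniform on $[0,360)$ and each $\Pr(q_1,\dots,q_k\in L(p))$ is the normalized length of the corresponding intersection of semicircles. Under $\nu_1$, the events $\{q,r\in L(p)\}$ range, as $q,r$ vary, over all arcs of length strictly below $180^\circ$, and each carries exactly its uniform measure $\tfrac{180-\alpha}{360}$; since such arcs generate the Borel sets of the circle, the law of $\omega$ under $\nu_1$ must also be uniform. Therefore $L(p)$ has the same half-plane law under $\nu_1$ and $\nu_2$ for every $p$, and by the first paragraph $\nu_1$ and $\nu_2$ induce the same $\rho$ — the same RCC is rationalized by an REU and by an RWU, giving the second statement. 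For the first statement I note that the computation never used the particular circle; replacing it by any other circle surrounding the triangle yields an RWU $\nu_1'$ inducing the same $\rho$, while $\nu_1'\neq\nu_1$ because the two assign probability one to disjoint sets of intersection points, hence to disjoint sets of preferences.

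The main obstacle is the single inequality-free computation establishing $\Pr(p\succsim q,\ p\succsim r)=\tfrac12(1-\tfrac{\alpha}{180})$ under $\nu_1$ \emph{independently of the circle}. The work lies in translating ``$p$ is optimal'' into an arc of admissible intersection points and then invoking the intersecting-chords angle theorem correctly; one must be careful that the intersection points lie \emph{outside} the Marschak--Machina triangle, on the surrounding circle, while $p,q,r$ lie inside it, so that $p$ is interior to each chord and the theorem applies. Once this circle-independent value is secured, the remaining steps — null ties, reduction to $L(p)$, and the uniqueness-of-measure argument promoting two-point agreement to full agreement of laws — are routine.
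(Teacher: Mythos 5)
Your reconstruction of the core example matches the paper's: the chord-angle computation giving $\rho_{\{p,q,r\}}(p)=\tfrac12\bigl(1-\tfrac{\alpha}{180}\bigr)$ independently of the circle, the reduction of all choice probabilities to laws of lower contour sets (ties being null), and the observation that a different circle yields another rationalizing RWU are exactly the paper's Figure~4 argument with the ``it can be shown'' steps filled in. The second claim (one RCC rationalized by both an REU and an RWU) is essentially complete in your write-up, modulo one localized flaw: the step ``the events $\{q,r\in L(p)\}$ range over all arcs of length strictly below $180^\circ$'' is true only when $p$ is interior to $\Delta$; for $p$ on the boundary the directions $\theta_{pq}$, $q\in\Delta$, sweep only a cone, so those arcs do not generate the Borel sets and your $\pi$--$\lambda$ step does not apply. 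This is easily repaired: for lotteries $q_1,\dots,q_k\in\Delta$ the event $\bigcap_i\{q_i\in L(p)\}$ coincides with the two-point event for the two angularly extreme $q_i$'s, so agreement on two-point events already gives agreement on all menu probabilities, with no measure-extension argument and no interiority assumption.

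The genuine gap is in the first claim, where you argue $\nu_1'\neq\nu_1$ because the two circles carry ``disjoint sets of intersection points, hence disjoint sets of preferences.'' In the paper's framework a random (weighted) utility is a finitely additive measure on $(\Omega,\mathcal{F}(\mathcal{C}))$, the field generated by the choice events $N(D,A)$; two random utilities are distinct only if they disagree on some set \emph{in that field}. Sets in $\mathcal{F}(\mathcal{C})$ are finite Boolean combinations of events defined by finitely many preference comparisons — wedge-shaped regions in the (intersection point, orientation) parameter space — and the set of preferences whose intersection point lies on a given circle is not among them. Mutual singularity of the parametrizing Borel measures is perfectly compatible with equality on $\mathcal{F}(\mathcal{C})$; indeed, since $\nu_1$ and $\nu_1'$ rationalize the same $\rho$, they agree on \emph{every} $N(D,A)$, and whether agreement on these events forces agreement on the whole field is precisely the identification question the paper is about, so it cannot be waved away. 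To close the gap you must exhibit a joint-choice event on which the two measures differ, as in the paper's footnote on this example: take $p'=\tfrac12 p+\tfrac12 r$, $q'=\tfrac12 q+\tfrac12 r$, so that $\overleftrightarrow{p'q'}\parallel\overleftrightarrow{pq}$, and consider $E=N(\{p,q\},\{p\})\cap N(\{p',q'\},\{q'\})\in\mathcal{F}(\mathcal{C})$. A weighted utility preference lies in $E$ exactly when its intersection point falls in the strip between the two parallel lines (with the appropriate orientation), so $\nu(E)$ equals one half the fraction of the circle's length inside that strip — a quantity that genuinely depends on the circle (for concentric circles it is strictly decreasing in the radius). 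This event also separates $\nu_1$ from $\nu_2$ ($\nu_2(E)=0$ by independence), though for the second claim no such separation is needed, since REU and strict RWU are measures over disjoint sets of preferences by construction.
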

\par
Suppose that an analyst aims to test expected utility theory at the individual level, but only has choice data at the group level. Then she might consider testing REU theory instead. However, while the consistency between observed random choice and REU is necessary for all individuals to be expected-utility maximizers, it is not sufficient. Our example illustrates that, even if the observed behavior satisfies all the axioms of REU, it could be that no individual in the population is an expected-utility maximizer.
\par
This observation connects to classical demand theory. It is known that, even if each consumer behaves irrationally, aggregate demand could satisfy the weak axiom of revealed preference, even be consistent with maximization of a single preference (\citealp{Becker:1962}; \citealp{Grandmont:1992}). Our result here is an analogy if we view expected utility theory as a rational choice behavior under risk. The example demonstrates that if individuals violate the independence axiom in diverse directions, then their irrational behaviors ``cancel out''. Hence the observed population choice could be fully rational.

\begin{figure}
\centering
\begin{tikzpicture}[line cap=round,line join=round,>=triangle 45,x=3.5cm,y=3.5cm]
\clip(-0.5,-0.5) rectangle (3.5,1.5);
\draw [line width=2pt] (0,0)-- (1,0);
\draw [line width=2pt] (1,0)-- (0,1);
\draw [line width=2pt] (0,1)-- (0,0);
\draw [line width=1pt,color=red] (-0.13949708113416207,-0.1387785384861777)-- (1,1);
\draw [line width=1pt,color=red] (-0.13949708113416207,-0.1387785384861777)-- (1.1962635952886291,0.0247839933206939);
\draw [line width=1pt,color=red] (-0.13949708113416207,-0.1387785384861777)-- (0.034969619459835156,1.2678592350529179);
\draw [line width=1pt,color=red] (-0.13949708113416207,-0.1387785384861777)-- (1.1635510889272547,0.43369032283787284);
\draw [line width=1pt,color=red] (-0.13949708113416207,-0.1387785384861777)-- (0.45478011776414096,1.2133383911172941);
\draw [->,line width=1pt,color=darkgray] (0.2639571639894565,-0.07880561015699143) -- (-0.08497623719853792,0.2973882129988132);
\draw [line width=2pt] (2,0)-- (3,0);
\draw [line width=2pt] (3,0)-- (2,1);
\draw [line width=2pt] (2,1)-- (2,0);
\draw [line width=1pt,color=red] (1.9322949884195548,-0.10061394773124098)-- (2.9027660104736643,1.3060238258078547);
\draw [line width=1pt,color=red] (2.9027660104736643,1.3060238258078547)-- (3.017259782738475,-0.171491044847552);
\draw [line width=1pt,color=red] (2.9027660104736643,1.3060238258078547)-- (2.6,-0.2);
\draw [line width=1pt,color=red] (2.9027660104736643,1.3060238258078547)-- (1.8,0.6);
\draw [line width=1pt,color=red] (2.9027660104736643,1.3060238258078547)-- (1.7905407941869322,0.92983000265205);
\draw [->,line width=1pt,color=darkgray] (2.9409306012286014,0.8153362303872399) -- (2.510215934137171,1.169721715968795);
\draw [line width=1.2pt,dotted,color=ForestGreen] (0.5,0.5) circle (3.15cm);
\draw [line width=1.2pt,dotted,color=ForestGreen] (2.5,0.5) circle (3.15cm);
\draw (-0.5,1.4) node[anchor=north west] {(a)};
\draw (1.5,1.4) node[anchor=north west] {(b)};
\begin{scriptsize}
\draw [fill=blue] (1,0) circle (2pt);
\draw[color=blue] (1.05,0.05) node {$w_1$};
\draw [fill=blue] (0,0) circle (2pt);
\draw[color=blue] (-0.08,0) node {$w_3$};
\draw [fill=blue] (0,1) circle (2pt);
\draw[color=blue] (0.05,1.05) node {$w_2$};
\draw [fill=blue] (-0.13949708113416207,-0.1387785384861777) circle (2pt);
\draw[color=blue] (-0.2,-0.2) node {$x_1$};
\draw [fill=blue] (3,0) circle (2pt);
\draw[color=blue] (3.05,0.05) node {$w_1$};
\draw [fill=blue] (2,0) circle (2pt);
\draw[color=blue] (1.92,0) node {$w_3$};
\draw [fill=blue] (2,1) circle (2pt);
\draw[color=blue] (2.05,1.05) node {$w_2$};
\draw [fill=blue] (2.9027660104736643,1.3060238258078547) circle (2pt);
\draw[color=blue] (2.96,1.36) node {$x_2$};
\end{scriptsize}
\end{tikzpicture}
\caption{}
\medskip 
\begin{minipage}{0.75\textwidth} 
{\scriptsize (a): The preference follows weighted utility theory. All indifference curves intersect at $x_1$, and the preference increases counterclockwise.\\(b): The preference follows weighted utility theory. All indifference curves intersect at $x_2$, and the preference increases clockwise.\par}
\end{minipage}
\label{RIEU3}

\centering
\begin{tikzpicture}[line cap=round,line join=round,>=triangle 45,x=4cm,y=4cm]
\clip(-0.5,-0.5) rectangle (1.5,1.5);
\draw [shift={(0.39298982797043414,0.520923673134872)},line width=1pt,fill=black,fill opacity=0.1] (0,0) -- (-144.10733431225452:0.10793411920506535) arc (-144.10733431225452:-57.17616803289554:0.10793411920506535) -- cycle;
\draw [line width=2pt] (0,0)-- (1,0);
\draw [line width=2pt] (1,0)-- (0,1);
\draw [line width=2pt] (0,1)-- (0,0);
\draw [line width=1.2pt,dotted,color=ForestGreen] (0.5,0.5) circle (3.6cm);
\draw [line width=1.2pt,dotted,color=darkgray] (-0.2721644680441808,0.03750552357233953)-- (1.1817247035724638,1.0952098959234442);
\draw [line width=1.2pt,dotted,color=darkgray] (0.9236593756105095,-0.2968889858994874)-- (-0.0540810923016841,1.215155752581817);
\draw [line width=2pt,color=darkgray] (0.39298982797043414,0.520923673134872)-- (0.11675088536327166,0.3210139120375841);
\draw [line width=2pt,color=darkgray] (0.39298982797043414,0.520923673134872)-- (0.6,0.2);
\draw [shift={(0.5,0.5)},line width=2pt,color=ForestGreen]  plot[domain=-1.082156979789941:0.7177495978319877,variable=\t]({1*0.9025072422926036*cos(\t r)+0*0.9025072422926036*sin(\t r)},{0*0.9025072422926036*cos(\t r)+1*0.9025072422926036*sin(\t r)});
\draw [shift={(0.5,0.5)},line width=2pt,color=ForestGreen]  plot[domain=2.229962571552465:3.6812459978122996,variable=\t]({1*0.904684258344917*cos(\t r)+0*0.904684258344917*sin(\t r)},{0*0.904684258344917*cos(\t r)+1*0.904684258344917*sin(\t r)});
\begin{scriptsize}
\draw [fill=blue] (0.39298982797043414,0.520923673134872) circle (2pt);
\draw[color=blue] (0.34,0.52) node {$p$};
\draw [fill=blue] (0.11675088536327166,0.3210139120375841) circle (2pt);
\draw[color=blue] (0.12,0.38) node {$q$};
\draw [fill=blue] (0.6,0.2) circle (2pt);
\draw[color=blue] (0.62,0.25) node {$r$};
\draw [fill=blue] (-0.0540810923016841,1.215155752581817) circle (2.5pt);
\draw[color=blue] (-0.06,1.29) node {$D$};
\draw [fill=blue] (1.1817247035724638,1.0952098959234442) circle (2.5pt);
\draw[color=blue] (1.2,1.16) node {$A$};
\draw [fill=blue] (-0.2721644680441808,0.03750552357233953) circle (2.5pt);
\draw[color=blue] (-0.3,-0.02) node {$F$};
\draw [fill=blue] (0.9236593756105095,-0.2968889858994874) circle (2.5pt);
\draw[color=blue] (0.95,-0.35) node {$C$};
\draw [fill=blue] (1.4,0.5) circle (2.5pt);
\draw[color=blue] (1.46,0.5) node {$B$};
\draw [fill=blue] (-0.4,0.5) circle (2.5pt);
\draw[color=blue] (-0.46,0.5) node {$E$};
\draw[color=black] (0.38,0.38) node {$\alpha^\circ$};
\end{scriptsize}
\end{tikzpicture}
\caption{}
\medskip 
\begin{minipage}{0.75\textwidth} 
{\scriptsize Under $\nu_1$, the realized preference ranks $p$ optimal in $D\equiv\{p,q,r\}$ if and only if either (i) the indifference curves intersect at some point on arc $\widehat{ABC}$ and the preference increases clockwise, or (ii) the indifference curves intersect at some point on arc $\widehat{DEF}$ and the preference increases counterclockwise. Because all possible intersection points distribute uniformly on the circle, the realized point lies on $\widehat{ABC}\cup\widehat{DEF}$ with probability $1-\frac{\alpha}{180}$. Because the direction of increasing preference is clockwise or counterclockwise with equal probability, $\rho_D(p)=\frac{1}{2}\left(1-\frac{\alpha}{180}\right)$ if $\rho$ is rationalized by $\nu_1$.\par}
\end{minipage}
\label{RIEU4}
\end{figure}

\section{Restoring Uniqueness}
This section provides some positive results on the unique identification of random non-expected utility.  We find one class of non-EU preferences under which random choice identifies underlying distribution of preferences uniquely. Then we discuss the reason for general non-uniqueness and suggest how to enrich the observable to improve identification. 

\subsection{Random weighted utility with monotonicity in stochastic dominance}

\begin{figure}
\centering
\begin{tikzpicture}
[line cap=round,line join=round,>=triangle 45,x=2.5cm,y=2.5cm]
\clip(-3,-2) rectangle (3,2);
\draw[->,ultra thin,color=darkgray] (-3,0)--(3,0) node[right]{$r_1$};
\draw[->,ultra thin,color=darkgray] (0,-2)--(0,2) node[above]{$r_2$};
\draw [line width=2pt,color=darkgray] (-1,0)-- (0,1);
\draw [line width=2pt,color=darkgray] (0,1)-- (0,-1);
\draw [line width=2pt,color=darkgray] (0,-1)-- (-1,0);
\draw [line width=1pt,color=darkgray,dashed] (-4,-3)-- (2,3);
\draw [line width=1pt,color=darkgray,dashed] (-4,3)-- (2,-3);
\draw [line width=1pt,color=darkgray,dashed] (-2,-3)-- (4,3);
\draw [line width=1pt,color=darkgray,dashed] (-2,3)-- (4,-3);
\path [fill=ForestGreen,opacity=0.1] (-4,3)--(-4,-3)--(-1,0)--(-4,3);
\path [fill=blue,opacity=0.1] (4,3)--(4,-3)--(1,0)--(4,3);
\path [fill=darkgray,opacity=0.1] (-1,0)--(0,1)--(0,-1)--(-1,0);
\draw[red, semithick, dashed, domain=-3:3] plot (\x, {((1-(-0.5))/(0-(-2)))*(\x-0)+1});
\draw[red, semithick, dashed, domain=-3:3] plot (\x, {((-1-(-0.5))/(0-(-2)))*(\x-0)-1});
\draw[red, semithick, domain=-3:3] plot (\x, {((0.4-(-0.5))/(0-(-2)))*(\x-0)+0.4});
\draw[color=black] (0,1.12) node {$\bar{w}$};
\draw[color=black] (0,-1.12) node {$\underline{w}$};
\draw[color=black] (-1.12,0) node {$w$};
\draw[fill=black] (-2,-0.5) circle (2pt);
\draw[color=black] (-2,-0.6) node {$x$};
\draw[fill=red] (0,0.4) circle (2pt);
\draw[color=red] (0.14,0.33) node {$p_a$};
\draw[color=red] (1.8,0.75) node {$\tilde{m}_a=a\tilde{m}_1+(1-a)\tilde{m}_0$};
\draw[color=red] (1,1.6) node {$\tilde{m}_1$};
\draw[color=red] (2,-1.35) node {$\tilde{m}_0$};
\draw [decorate,red,decoration={brace,amplitude=5pt},xshift=2pt,yshift=0pt] (0,1) -- (0,0.4) node [red,midway,xshift=0.6cm] {\footnotesize $1-a$};
\draw [decorate,red,decoration={brace,amplitude=5pt},xshift=-2pt,yshift=0pt] (0,-1) -- (0,0.4) node [red,midway,xshift=-0.3cm] {\footnotesize $a$};
\end{tikzpicture}
\caption{}
\medskip
\begin{minipage}{0.75\textwidth} 
{\scriptsize The best and worst prizes, $\bar{w}$ and $\underline{w}$ , are identified with the points $(0,1)$ and $(0,-1)$ respectively. The mediocre prize $w$ is identified with $(-1,0)$. The intersection point of the indifference curves is $x$. The slopes of the lines $\overleftrightarrow{x\bar{w}}$ and $\overleftrightarrow{x\underline{w}}$ are $\tilde{m}_1$ and $\tilde{m}_0$ respectively. The lottery $p_a$ equals $a\bar{w}+(1-a)\underline{w}$, and the slope of its indifference curve equals $a\tilde{m}_1+(1-a)\tilde{m}_0$.\par}
\end{minipage}
\label{RIEU6}
\end{figure}

Suppose that $W=\{w,\underline{w},\bar{w}\}$, where $\bar{w}$ is the best prize, and $\underline{w}$ the worst. As shown in Figure~\ref{RIEU6}, identify $\bar{w}$ and $\underline{w}$  with the points $(0,1)$ and $(0,-1)$ in $\mathbb{R}^2$ respectively, and let the point $(-1,0)$ denote $w$. If a weighted utility preference is monotone with respect to first-order stochastic dominance, the intersection point of the indifference curves must lie in the green or blue area in the figure. If it lies in the green area, the preference increases counterclockwise. If it lies in the blue area, the preference increases clockwise.

We can depict the possible locations of the intersection point in the following way. Draw a line passing $\underline{w}$ with slope $m_0$, and another line passing $\bar{w}$ with slope $m_1$. The intersection of these two lines lies in the green or blue area if and only if $(m_0,m_1)\in[-1,1]^2$. Therefore, a random weighted utility with monotonicity in first-order stochastic dominance can be identified with a random vector $(\tilde{m}_0,\tilde{m}_1)\in\mathbb{R}^2$ whose support is $[-1,1]^2$. 

For any $a\in(0,1)$, let $p_a=a\bar{w}+(1-a)\underline{w}$. The slope of the indifference curve of $p_a$ is equal to $a\tilde{m}_1+(1-a)\tilde{m}_0$ (see Figure~\ref{RIEU6}). The probability that $p_a$ is chosen over another lottery $q$ is equal to the probability that the slope of the indifference curve of $p_a$ is less than the slope of $\overleftrightarrow{p_aq}$ (i.e.\ $q$ lies below the indifference curve of $p_a$). Therefore, we can recover the distribution of 
$a\tilde{m}_1+(1-a)\tilde{m}_0$ for any $a\in(0,1)$ from random choice. This allows us to pin down all the joint moments of $\tilde{m_0}$ and $\tilde{m_1}$.

To find out the moments, fix any natural number $n$. Take any $n+1$ distinct $b_0,\cdots,b_n\in(0,1)$. For each $k\in\{0,\cdots,n\}$, we can compute $E[(\tilde{m}_1+b_k\tilde{m}_0)^n]$. Observe that $$E[(\tilde{m}_1+b_k\tilde{m}_0)^n]=E[\tilde{m}_1^n]+\cdots+{n\choose j}b_k^{j}E[\tilde{m}_1^{n-j}\tilde{m}_0^j]+\cdots+b_k^nE[\tilde{m}_0^n].$$ Hence we have a system of linear equations for the unknown moments $E[\tilde{m}_1^{n-j}\tilde{m}_0^j]$, $0\leq j\leq n$: 
\[
\begin{bmatrix} 
    1 & \cdots & {n\choose j}b_0^{j} & \cdots & b_0^n\\
    \vdots & \ddots & & & \vdots\\
    \vdots & \cdots & {n\choose j}b_k^{j} & \cdots & b_k^n\\
    \vdots &  & & \ddots & \vdots\\
    1 & \cdots & {n\choose j}b_n^{j} & \cdots & b_n^n
\end{bmatrix}\times
\begin{bmatrix}
E[\tilde{m}_1^n] \\ \vdots \\ E[\tilde{m}_1^{n-j}\tilde{m}_0^j] \\ \vdots \\ E[\tilde{m}_0^n]
\end{bmatrix}=
\begin{bmatrix}
E[(\tilde{m}_1+b_0\tilde{m}_0)^n] \\ \vdots \\ E[(\tilde{m}_1+b_k\tilde{m}_0)^n] \\ \vdots \\ E[(\tilde{m}_1+b_n\tilde{m}_0)^n]
\end{bmatrix}.
\]
The $(n+1)$-by-$(n+1)$ matrix on the left-hand side is equivalent to a square Vandermonde matrix. It is invertible as long as all $b_0,\cdots,b_n$ are distinct. Thus this system of linear equations has a unique solution. Consequently, random choice pins down $E[\tilde{m}_1^i\tilde{m}_0^j]$ for any nonnegative integers $i$ and $j$.

Now the identification problem translates into a classic moment problem: if all the joint moments are given, is there a unique joint distribution that generates these moments? The answer is yes in our case because the support of $(\tilde{m}_0,\tilde{m}_1)$ is compact. A distribution with a compact support is uniquely determined by its moments.\footnote{It can be proved by the Stone-Weierstrass theorem, which implies that any continuous function on a compact subset of a multi-dimensional Euclidean space can be uniformly approximated by polynomials.} Note that without first-order stochastic dominance, the support would not be bounded.  So the argument is not valid for general random weighted utility, and as shown in Section 3, uniqueness fails.\footnote{In general, two different distributions can generate identical moments. The moment problem has been studied for decades. See \cite{Kleiber:2013} for a recent study.} Our result here shows that stochastic dominance may help restoring unique identification when risk preferences do not conform to expected utility theory.

\begin{prop}
Suppose that there are three prizes, and the best and worst ones are prespecified. Then a random choice correspondence is rationalized by at most one random weighted utility under which all possible risk preferences are monotonic with respect to first-order stochastic dominance.
\end{prop}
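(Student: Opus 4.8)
The plan is to reduce the uniqueness question to a multivariate moment problem on a compact set, exactly along the lines set up above. First I would make the parametrization precise: under monotonicity in first-order stochastic dominance, every (strict) weighted utility preference is pinned down by the location of the common intersection point of its indifference lines, and, as argued with Figure~\ref{RIEU6}, that location is in turn encoded by the pair of slopes $(\tilde m_0,\tilde m_1)$ of the lines through $\underline w$ and $\bar w$, with the monotonicity constraint being exactly $(\tilde m_0,\tilde m_1)\in[-1,1]^2$. Thus a random weighted utility with monotonicity is nothing but the law of a random vector supported on the compact square $[-1,1]^2$, and proving the proposition amounts to showing that random choice determines this law uniquely.

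Next I would recover, for each $a\in(0,1)$, the entire distribution of the random slope $S_a:=a\tilde m_1+(1-a)\tilde m_0$ from binary choice. The geometric fact recorded above is that the realized preference ranks $p_a$ above a lottery $q$ precisely when the slope of the indifference line through $p_a$ (namely $S_a$) is smaller than the slope of the chord $\overleftrightarrow{p_a q}$. Hence, holding $p_a$ fixed and varying $q$ so that the chord slope sweeps a threshold $t$, the choice probability $\rho_{\{p_a,q\}}(\{p_a\})$ traces out $\Pr(S_a<t)$ (or its closed analogue), i.e.\ the CDF of $S_a$. In particular this yields every moment $E[S_a^{\,n}]$.

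Then I would convert marginal moments into joint moments. Expanding $E[S_a^{\,n}]=E[(a\tilde m_1+(1-a)\tilde m_0)^n]$ by the binomial theorem expresses it as a fixed linear combination of the unknown joint moments $E[\tilde m_1^{\,n-j}\tilde m_0^{\,j}]$, $0\le j\le n$, with coefficients depending only on $a$. Evaluating at $n+1$ distinct values of $a$ produces the Vandermonde-type system displayed above; its coefficient matrix is invertible because the chosen parameters are distinct, so the system determines all order-$n$ joint moments. Letting $n$ range over all naturals pins down $E[\tilde m_1^{\,i}\tilde m_0^{\,j}]$ for every $i,j\ge 0$. Finally, because the support $[-1,1]^2$ is compact, the joint law is uniquely determined by its moments: polynomials are dense in $C([-1,1]^2)$ by Stone--Weierstrass, so two laws with identical moments assign the same integral to every continuous function and therefore coincide. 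Consequently any two monotone random weighted utilities rationalizing the same $\rho$ induce the same law on $[-1,1]^2$ and are equal.

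The step I expect to require the most care is the recovery of the full CDF of $S_a$ from choice data. One must check that, for a fixed $p_a$ on the edge $\overline{\bar w\,\underline w}$, feasible lotteries $q\in\Delta$ generate chord slopes covering the entire range needed to separate the compact set of possible values of $S_a$, and one must handle ties --- the boundary event where the chord slope equals $S_a$ --- so that the left- and right-continuous versions of the CDF are both identified and any atoms are correctly located. Once this geometric bookkeeping is in place, the remaining steps (Vandermonde inversion and moment determinacy) are routine.
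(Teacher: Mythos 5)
Your proposal is correct and follows essentially the same route as the paper: parametrizing monotone weighted utility preferences by slope pairs $(\tilde m_0,\tilde m_1)\in[-1,1]^2$, recovering the law of $a\tilde m_1+(1-a)\tilde m_0$ from binary choice probabilities, inverting the Vandermonde system to obtain all joint moments, and invoking compact-support moment determinacy via Stone--Weierstrass. Your closing remark about carefully identifying the CDF of $S_a$ (chord-slope coverage and ties) is a point the paper itself passes over silently, so it is a welcome refinement rather than a deviation.
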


This result is valid for local random choice data. The above identification strategy works if we observe random choice from menus $\{p_a,q\}$  for all $a$ in an arbitrary open subset of $[0,1]$ and all $q$ in a neighborhood of $p_a$. Note that the identification of REU has a similar property. Fixing any $p$, random choice from menu $\{p,q,r\}$ for all $q,r$ in a neighborhood of $p$ is sufficient to pin down REU.  

\subsection{Random joint choice across menus}
The lack of unique identification implies that random choice does not fully capture the implications of a random utility model. Sometimes we can add more assumptions into a model to improve identification. However, a modeler may find those assumptions unappealing, and the actual choice data may not be consistent with a more constrained model. Another way to restore uniqueness, as demonstrated in this section, is to enrich the observable.  
\par
Under the classic notion of random choice, the distribution of choice out of each menu is observable. Although there are infinitely many menus in our setting, random choice is by no means rich choice data in general. To see this, we first review how the identification of REU works. In contrast to Proposition 1, when all possible preferences conform to expected utility theory then random choice identifies a unique distribution of preferences.
\begin{thm}[\citealp{Gul:2006}]
A random choice correspondence is rationalized by at most one random expected utility.
\end{thm}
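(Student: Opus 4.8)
The plan is to use the linearity of expected utility to recast the problem as one about a measure on a circle, and then to show that the random choice data determine the mass of every sufficiently short arc. First I would fix the parametrization of the preference space. An expected-utility preference over $\Delta$ is invariant to adding a constant to the Bernoulli utility and to multiplying it by a positive scalar, so each non-constant expected-utility preference corresponds to a unique direction $e$ on the unit circle $S^1$ in the plane $\{u\in\mathbb{R}^3:\sum_n u(w_n)=0\}$, namely the normalized gradient of the linear functional $p\mapsto\langle e,p\rangle$ whose upper contour sets it induces. Modulo the constant preference (which can be handled separately), a random expected utility $\mu$ is thus a finitely additive probability measure on $S^1$, and the question ``is $\mu$ unique?'' becomes ``do the choice probabilities pin down this measure on $S^1$?''

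Next I would translate choice into arc masses. For a binary menu $\{p,q\}$, the preferences strictly ranking $p$ above $q$ are $\{e\in S^1:\langle e,p-q\rangle>0\}$, an open half-circle, so $\rho_{\{p,q\}}(\{p\})$ is exactly its $\mu$-mass. For a three-lottery menu $D=\{p,q,r\}$ in general position, the event $N(D,\{p\})$ that $p$ is the unique maximizer is $\{e:\langle e,p-q\rangle>0 \text{ and } \langle e,p-r\rangle>0\}$, the intersection of two open half-circles, hence an open arc of length strictly less than $\pi$ whose endpoints are the outward normals to the edges $\overline{pq}$ and $\overline{pr}$. Thus $\rho_D(\{p\})=\mu(N(D,\{p\}))$ reads off the $\mu$-mass of this arc directly. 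I would then verify the elementary geometric claim that, since $\Delta$ is full-dimensional, the vectors $p-q$ realize all directions and the vertex angle $\angle qpr$ can be made any value in $(0,\pi)$ with any orientation, so that every open arc of $S^1$ of length below $\pi$ arises as some $N(D,\{p\})$; moreover the tie configurations $N(D,\{q,r\})$ with $|A|=2$ single out the individual boundary directions, whose $\mu$-masses are therefore also observed.

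With the mass of every short arc and every tie direction known, uniqueness follows by a measure-theoretic step. Two probability measures on $S^1$ agreeing on all open arcs of length below $\pi$ and on all singletons that can occur as tie normals must agree on all arcs, since any longer arc (and $S^1$ itself) is a finite disjoint union of short arcs and the endpoint atoms are pinned down by the tie data; by finite additivity they then agree on the entire field generated by these sets, which is precisely $\mathcal{F}(\mathcal{C})$ for $\Omega$ the class of expected-utility preferences. Hence at most one $\mu$ rationalizes $\rho$.

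The main obstacle is the realization step, not the bookkeeping. One must appreciate that binary menus alone are insufficient: knowing the mass of every half-circle leaves the even Fourier modes of $\mu$ undetermined, so the argument genuinely requires arcs of every length below $\pi$, which only the three-element menus supply, and one must check that such menus can be situated inside the bounded simplex $\Delta$ while still generating arcs of arbitrary position and arbitrary sub-$\pi$ length. A secondary but essential subtlety is the treatment of indifference: because $\mu$ may place atoms on individual directions, the boundaries between arcs (where ties occur) must be allocated consistently, and it is exactly the multi-valued formulation of $\rho$---which records $\rho_D(A)$ for $|A|\ge 2$---that lets us recover these atom masses rather than leaving them trapped in the finitely-additive gaps between open arcs.
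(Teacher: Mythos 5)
Your proof is correct within the paper's three-prize setting, but it takes a genuinely different route from the paper's own argument. The paper's proof is abstract and measure-theoretic: it defines the set function $\mu^*_E(N_E(D,A))=\rho_D(A)$ on the class $\mathcal{C}_E$, observes that $\mathcal{C}_E$ is a \emph{semifield} --- closure under finite intersections follows from the independence axiom, since $N_E(D,A)\cap N_E(D',A')=N_E(\lambda D+(1-\lambda)D',\lambda A+(1-\lambda)A')$ --- and then invokes the unique-extension theorem for finitely additive measures on semifields \citep[Theorem 3.5.1]{Rao:1983}. You instead exploit the two-dimensionality of the simplex: expected-utility preferences become directions on $S^1$ (plus the constant preference), ternary menus reveal the $\mu$-mass of every open arc of length below $\pi$, ties reveal the atoms, and agreement on the generated field follows by decomposing each of its elements into finitely many disjoint short open arcs and singletons and applying finite additivity. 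Each approach buys something: the paper's argument is dimension-free (nothing in it uses $N=2$, so it covers the general Gul--Pesendorfer theorem) and isolates the conceptual mechanism --- mixtures of menus reveal joint choice distributions --- that the paper then uses in Section 4 to explain why uniqueness fails once independence is dropped; your argument is more elementary and self-contained (no extension theorem), and your Fourier observation that half-circle masses leave the even modes undetermined is a genuine bonus, since it proves binary menus alone cannot identify $\mu$. Two points you wave at should be made explicit to close the argument: the constant preference is pinned down by $\rho_D(D)=\mu(\{*\})$ for any non-collinear ternary $D$, and one should check that every set $N_E(D,A)$ --- not just those coming from binary and ternary menus --- is itself a finite disjoint union of short open arcs, singletons, and $\{*\}$, so that the field your arcs generate really is all of $\mathcal{F}(\mathcal{C}_E)$; both verifications are routine, and neither affects the soundness of your approach.
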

\par
We suggest the reason for this difference between REU and RIEU. The key  concerns the {\it joint distribution} of choice across two or more menus. Such information is implicitly revealed through random choice under REU, but not under RIEU. A brief proof of Theorem 1 demonstrates this point.
\par
\begin{proof}[Sketch of Proof]
We need some preliminaries. First, define REU formally. Let $\Omega_E$ denote the set of all expected utility preferences over $\Delta$. Let $$N_E(D,A)\coloneqq\{\succsim\in\Omega_E:A=M(D,\succsim)\}.$$ Then let $$\mathcal{C}_E\coloneqq\{N_E(D,A):A,D\in\mathcal{D}\},$$ and let $\mathcal{F}(\mathcal{C}_E)$ denote the smallest field that contains every element of $\mathcal{C}_E$. A random expected utility (REU) is a finitely additive probability measure $\mu_E$ on $(\Omega_E,\mathcal{F}(\mathcal{C}_E))$.
\par
A class $\mathcal{A}$ of subsets of a set $X$ is called a {\it semiring} if (i) $\emptyset\in\mathcal{A}$; (ii) if $A,B\in\mathcal{A}$ then $A\cap B\in\mathcal{A}$; (iii) if $A,B\in\mathcal{A}$ then $A\setminus B=\cup_{k=1}^mC_k$ for some mutually disjoint sets $C_1,\cdots,C_m\in\mathcal{A}$. Say that $\mathcal{A}$ is a {\it semifield} if it is a semiring and $X\in\mathcal{A}$. A set function is a mapping $\nu:\mathcal{A}\rightarrow\mathbb{R}$. The set function $\nu$ is {\it finitely additive} if $\nu(\cup_{i=1}^mA_i)=\sum_{i=1}^m\nu(A_i)$ for any finite collection of mutually disjoint sets $\{A_1,\cdots,A_m\}\subset\mathcal{A}$ such that $\cup_{i=1}^mA_i\in\mathcal{A}$.
\par
The argument for the unique identification of REU is the following. Given RCC $\rho$, define a set function $\mu^*_E:\mathcal{C}_E\rightarrow[0,1]$ such that $\mu^*_E(N_E(D,A))=\rho_D(A)$ for all $N_E(D,A)\in\mathcal{C}_E$. Suppose that $\rho$ is rationalized by REU. Then $\mu^*_E$ is a finitely additive set function. Since $\mathcal{C}_E$ is a semifield, the extension theorem \citep[Theorem 3.5.1]{Rao:1983} applies, and the set function $\mu^*_E$ can be extended to a finitely additive measure $\mu_E$ on $\mathcal{F}(\mathcal{C}_E)$. Moreover, such extension is unique. Hence $\mu_E$ is the unique REU that rationalizes $\rho$.
\end{proof}
The key to the uniqueness of REU is that $\mathcal{C}_E$ is a semifield. This ensures that the extension of $\mu^*_E$ is unique. To identify RIEU, define $\mathcal{C}$ by (\ref{C}) and then define a set function $\mu^*:\mathcal{C}\rightarrow[0,1]$ such that, for all $N(D,A)\in\mathcal{C}$, $\mu^*(N(D,A))=\rho_D(A)$. RIEU $\mu$ rationalizes $\rho$ if and only if it is an extension of $\mu^*$. However, $\mathcal{C}$ is not a semifield. Hence the extension of $\mu^*$ may not be unique. If, on $(\Omega,\mathcal{F}(\mathcal{C}))$, there exist two different probability measures $\mu$ and $\mu'$ that both agree with $\mu^*$ on $\mathcal{C}$, then both of them are RIEUs and rationalize $\rho$.
\par
The class $\mathcal{C}$ fails to be a semifield because it is not closed under finite intersections. That is, even if $N(D,A)$ and $N(D',A')$ are both in $\mathcal{C}$, $N(D,A)\cap N(D',A')$ may be not. This reflects the fact that, {\it under RIEU, random choice does not reveal the joint distribution of choice across two menus}. That is, $\rho$ does not give the joint probability that $A$ is the set of all optimal lotteries in $D$ {\it and} $A'$ is the set of all optimal lotteries in $D'$.
\par
For example, consider RIEUs $\mu$ and $\mu'$ defined in Section 3.1. Let $p=(0,\frac{1}{2})$, $q=(\frac{1}{4},\frac{3}{4})$, $p'=(\frac{1}{2},0)$ and $q'=(\frac{3}{4},\frac{1}{4})$. They are depicted in Figures~\ref{RIEU1} and \ref{RIEU2}. Note that $p\succ_1(\succ'_1)q$ and $p'\prec_1(\succ'_1)q'$, and that $p\prec_2(\prec'_2)q$ and $p'\succ_2(\prec'_2)q'$. Therefore, under $\mu$, the probability that $p$ is chosen over $q$ {\it and} $p'$ is chosen over $q'$ is 0. However, under $\mu'$, this probability equals $\frac{1}{2}$. Thus, $\mu$ and $\mu'$ disagree on the joint distribution of choice across $\{p,q\}$ and $\{p',q'\}$.\footnote{Similarly, in our second example, $v_1$ and $v_2$ could be distinguished if we were given joint distributions of choice across two menus. Consider menus $\{p,q\}$ and $\{p',q'\}$, where $p'=\frac{1}{2}p+\frac{1}{2}r$ and $q'=\frac{1}{2}q+\frac{1}{2}r$ for some $r$. Under $\nu_1$, the probability of $p$ being chosen from $\{p,q\}$ and $q'$ being chosen from $\{p',q'\}$ is positive. Under $\nu_2$, that probability is zero. In other words, if $\nu_1$ is the actual distribution of preferences, then with positive probability, the joint choice will violate the independence axiom .}
\par
On the other hand, the class $\mathcal{C}_E$ is closed under finite intersections because, for any $\lambda\in(0,1)$,\footnote{Define mixtures of menus $D$ and $D'$ by $\lambda D+(1-\lambda)D'\coloneqq\{\lambda p+(1-\lambda)p':p\in D,p'\in D'\}$ for any $\lambda\in(0,1]$.} $$N_E(D,A)\cap N_E(D',A')=N_E(\lambda D+(1-\lambda)D',\lambda A+(1-\lambda)A')\in\mathcal{C}_E.$$ This follows from the independence axiom of expected utility. Specifically, for any expected utility preference $\succsim_E$, $\{p,q\}\subset D$, $\{p',q'\}\subset D'$, and $\lambda\in(0,1)$, \begin{align*}p\succsim_E q&\Leftrightarrow\lambda p+(1-\lambda)p'\succsim_E\lambda q+(1-\lambda)p';\\p'\succsim_E q'&\Leftrightarrow\lambda p+(1-\lambda)p'\succsim_E\lambda p+(1-\lambda)q'.\end{align*} These imply that $\lambda p+(1-\lambda)p'$ is optimal in $\lambda D+(1-\lambda)D'$ if and only if $p$ and $p'$ are optimal in $D$ and $D'$ respectively. In words, under REU, the distribution of choice from the mixture of $D$ and $D'$ reveals the joint distribution of choice across $D$ and $D'$. Once we deviate from REU, we may lose such information from random choice.
\par
For a given random utility model, a natural question is what is the minimal requirement on choice data for unique identification. Or at least, one would like to know how to enrich data to narrow down the set of rationalizing random preferences. The above discussion suggests that collecting data on joint frequencies of choice across different menus may be helpful. 
\par
Regardless of the domain of preferences, if the joint distribution of choice across any number of menus is observable, then the distribution of preferences is uniquely recoverable. It is because the class $$\{\cap_{i=1}^kN(D_i,A_i):A_i,D_i\in\mathcal{D}\ \forall\ 1\leq i\leq k;k\geq1\}$$ is closed under finite intersection. Such class is called a $\pi$-system. If two probability measures agree on a $\pi$-system that generates the field where they are defined, then they are identical.\footnote{This follows from Dynkin's $\pi$-$\lambda$ theorem.}
\par
However, it seems far-fetched to assume that all joint distributions of choice are observable. An advantage of REU is that no joint choice distribution is needed to identify the distribution of preferences. If we deviate from REU but still consider a restrictive domain of preferences, then joint distributions of choice across a limited number of menus might be sufficient. Our last result for random weighted utility demonstrates this point.
\par
What are the minimal observations to completely determine a weighted utility preference on the Marschak-Machina triangle? If we know $p\sim q$, $p'\sim q'$, and $p''\succ q''$, then we can infer the entire indifference map. This is because the lines $\overleftrightarrow{pq}$ and $\overleftrightarrow{p'q'}$ determine the intersection point of all the indifference curves, and then $p''\succ q''$ determines the direction of the increasing preference. A similar result holds for random weighted utility: Joint distributions of choice across three binary menus pin down a unique distribution of weighted utility preferences.
\par
\begin{prop}
Suppose that there are three prizes. Suppose that the joint distribution of choice across any three binary menus is observable. Then such behavior is rationalizable by at most one random weighted utility.
\end{prop}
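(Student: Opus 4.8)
The plan is to represent each (strict) weighted-utility preference by the two pieces of data the paper has already isolated: the point $x \in \mathbb{R}^2 \setminus \Delta$ at which all indifference lines meet, and the orientation $o \in \{\mathrm{cw}, \mathrm{ccw}\}$ of increasing preference. For a binary menu $\{p,q\}$ I would record the signed area $D(x;p,q) = \det[\,p-x,\ q-x\,]$, which is affine in $x$ and vanishes exactly on the line $\overleftrightarrow{pq}$. A short computation gives the choice rule: $p \succ q$ holds iff $x$ lies in the half-plane $H^-_{pq} = \{D(\cdot\,;p,q) < 0\}$ when $o = \mathrm{ccw}$, and in the complementary half-plane $H^+_{pq}$ when $o = \mathrm{cw}$ (up to a fixed sign convention). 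Consequently a random weighted utility is nothing but a pair of finite Borel measures $(\nu_{\mathrm{ccw}}, \nu_{\mathrm{cw}})$ on $\mathbb{R}^2 \setminus \Delta$ with total mass one, and proving the proposition is the same as showing that the observable joint choice data determine both measures.

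The key step --- and the one that explains why \emph{three} menus are needed --- is an orientation-separation trick. Choose three binary menus $\{p_i,q_i\}$, $i = 1,2,3$, whose lines $\overleftrightarrow{p_iq_i}$ bound a genuine (bounded) triangle $T$, and relabel within each menu so that the inner side of the $i$-th line is $H^-_{p_iq_i}$. Then $\bigcap_i H^-_{p_iq_i} = T$ while $\bigcap_i H^+_{p_iq_i} = \emptyset$, since no point can lie on the outer side of all three sides of a triangle. Reading the choice rule jointly, the event ``$p_i$ is chosen from every menu'' has $\mathrm{ccw}$-region $T$ and $\mathrm{cw}$-region $\emptyset$, so its observed joint probability equals $\nu_{\mathrm{ccw}}(T)$; symmetrically the event ``$q_i$ is chosen from every menu'' has probability $\nu_{\mathrm{cw}}(T)$. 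Thus the joint distribution across these three menus reveals the two orientation-specific masses of $T$ \emph{separately}. Two menus can never do this: the intersection of two half-planes is a wedge whose antipodal wedge is always nonempty, so the two orientations stay entangled. Three lines are exactly what is required to cut out a bounded cell with empty antipode.

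Letting the three menus vary, $T$ ranges over a rich family of triangles in $\mathbb{R}^2 \setminus \Delta$; here I would check that every secant line of $\Delta$ is realizable as some $\overleftrightarrow{p_iq_i}$, so that all directions and enough positions are available to place arbitrarily small triangles around any exterior point. Knowing $\nu_{\mathrm{ccw}}(T)$ for all such $T$ pins down $\nu_{\mathrm{ccw}}$: a half-plane is an increasing limit of these triangles, so all half-plane masses are recovered, these are precisely the one-dimensional projections of $\nu_{\mathrm{ccw}}$, and a finite Borel measure on $\mathbb{R}^2$ is determined by its projections (the Cram\'er--Wold device). The same argument recovers $\nu_{\mathrm{cw}}$, and hence the whole random weighted utility. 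I would also note the pleasant contrast with Section 4.1: because here the support of the intersection point need not be compact, no moment problem and no compactness hypothesis is needed.

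I expect the main obstacle to be the measure-theoretic bookkeeping rather than the geometric idea. First, a random utility is a priori only finitely additive on the field $\mathcal{F}(\mathcal{C})$, so I must argue that it may be treated as (or induces) a countably additive Borel measure, or else recast the final step so that agreement on the triangle cells propagates to all of $\mathcal{F}(\mathcal{C})$ --- for instance by exhibiting a generating $\pi$-system and invoking Dynkin's $\pi$-$\lambda$ theorem. Second, I must verify the realizability claim above, handling the constraint that all three sides are secants of $\Delta$ while $T$ sits in the exterior support region. Finally there are the harmless degeneracies --- indifference ties when $x$ lands on some $\overleftrightarrow{p_iq_i}$, which the multi-valued $\rho$ actually lets us detect (so any line masses can be accounted for), and the excluded expected-utility limit $x \to \infty$. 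The geometric separation in the second paragraph is the heart of the proof and is clean once the parameterization is in place; everything after it is standard.
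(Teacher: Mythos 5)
Your parameterization, the half-plane choice rule, and the orientation-separation trick are all sound: three secant lines bounding a triangle $T$ have empty ``antipodal'' intersection, so the joint choice probabilities across those three menus reveal $\nu_{\mathrm{ccw}}(T)$ and $\nu_{\mathrm{cw}}(T)$ separately. This is indeed the geometry underlying the paper's own construction (its regions $R_1,\dots,R_5$). The genuine gap is in your reconstruction step, and it is not bookkeeping. The claim that \emph{a half-plane is an increasing limit of these triangles} is false, precisely because every side of a realizable triangle must be a secant of $\Delta$. Concretely, put $\Delta$ inside the ball of radius $2$ about the origin, let $H=\{y\le 0\}$ be bounded by a secant, and suppose a triangle $T$ with all three sides secant to $\Delta$ contained the points $(R,-1)$, $(-R,-1)$, $(0,-R)$ for $R\ge 10$. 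Moving straight down from $(0,-R)$ one exits $T$ through a side lying on a secant $L'$ joining some $a\in\Delta$ to a point $(0,y_1)$ with $y_1\le -R$; a direct computation shows such a line crosses the horizontal line $y=-1$ at $|x|<3$, hence strictly separates $(R,-1)$ from $(-R,-1)$ --- contradicting that the convex set $T$ lies on one side of $L'$. So no increasing sequence of realizable triangles exhausts a half-plane, and your route to half-plane masses (and thence Cram\'er--Wold) collapses. Even setting this aside, only half-planes bounded by \emph{secant} lines could ever be reached, and these give each projection of $\nu$ only on the bounded interval of offsets where lines in that direction meet $\Delta$; Cram\'er--Wold needs full projections.

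There is a second structural problem that you flag but underestimate: in this paper a random utility is only a \emph{finitely additive} measure on the field $\mathcal{F}(\mathcal{C})$, so continuity from below --- the engine of any increasing-limit argument --- and the machinery behind Cram\'er--Wold are unavailable, and nothing guarantees a countably additive Borel surrogate. The paper avoids all limits: its Appendix C proves a purely finite, combinatorial lemma --- every $k$-fold joint binary event $\cap_{i=1}^{k}\{p_i\succsim q_i\}$ is a \emph{finite disjoint union} of $3$-fold joint binary events, obtained by a case analysis on the arrangement of the lines $\overleftrightarrow{p_iq_i}$ and by inserting auxiliary lotteries $r,s,t$ whose lines pass through the relevant intersection points. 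Given that lemma, the class $\mathcal{E}$ of all finite joint binary events is a $\pi$-system generating $\mathcal{F}(\mathcal{C})$ on which the two measures agree, and finite inclusion--exclusion finishes the proof with finite additivity alone. In short, your proposal correctly isolates what three menus buy you, but the passage from triangle masses back to the full random utility is the crux of the proposition, and the specific limiting/projection argument you offer for it fails; what must replace it is exactly the finite decomposition you deferred as ``standard.''
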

\begin{proof}
Now $\Omega$ denotes the set of all weighted utility preferences. The class $\mathcal{C}$ is still defined by (\ref{C}), and $\mathcal{F}(\mathcal{C})$ is the smallest field generated by $\mathcal{C}$. 
\par
As shown in Appendix C, if two random weighted utilities $\mu$ and $\mu'$ agree on the class $$\{\cap_{i=1}^3N(\{p_i,q_i\},A_i):A_i\subset\{p_i,q_i\}\subset\Delta\ \forall\ i=1,\cdots,3\},$$ then they also agree on the class $$\mathcal{E}\coloneqq\{\cap_{i=1}^kN(\{p_i,q_i\},A_i):A_i\subset\{p_i,q_i\}\subset\Delta\ \forall\ i=1,\cdots,k; k\geq 1\}.$$ The class $\mathcal{E}$ is a $\pi$-system. Moreover, the smallest field generated by $\mathcal{E}$ is also $\mathcal{F}(\mathcal{C})$. Therefore, $\mu$ and $\mu'$ agree on $\mathcal{F}(\mathcal{C})$.
\end{proof}
\par
The key is that the joint distribution of choice across any four or more binary menus can be deduced from the observable. Figure~\ref{RIEU5} illustrates. There are four pairs of lotteries, $\{(p_i,q_i)\}_{i=1}^4$. If $p_i\succ q_i$, then $\succsim$ rotates at some point (the intersection of all its indifference curves) in one side of the line $\overleftrightarrow{p_iq_i}$ clockwise (the direction of increasing preference), or in another side counterclockwise. Thus, in Figure~\ref{RIEU5}(a), if $p_i\succ q_i$ for all $i$, then $\succsim$ rotates at some point in the blue area clockwise, or in the green area counterclockwise.
\par
In Figure~\ref{RIEU5}(b), we plot three additional lotteries $r$, $s$, and $t$. The lines $\overleftrightarrow{rs}$ and $\overleftrightarrow{rt}$ divide the blue and green areas into five regions. Note that, for instance, if $r\succ s$, $t\succ r$, and $p_4\succ q_4$, then $\succsim$ rotates at some point in the region $R_2$ clockwise, or in $R_5$ counterclockwise. One can verify that \begin{align}p_i\succ q_i\ \forall\ i=1,\cdots,4\Longleftrightarrow&\left(p_1\succ q_1\land p_2\succ q_2\land s\succ r\right)\nonumber\\&\lor\left(r\succsim s\land t\succsim r\land p_4\succ q_4\right)\\&\lor\left(p_1\succ q_1\land r\succ t\land p_3\succ q_3\right).\nonumber\end{align} The right-hand side is a disjunction of three mutually exclusive statements. Let $P_{D_1,D_2,D_3}$ denote the joint distribution of choice across menus $D_1$, $D_2$, and $D_3$. Then (2) implies that \begin{align*}&\mbox{Probability that }p_i\mbox{ is chosen from }\{p_i,q_i\}\mbox{ for all }i=1,\cdots,4\\=&P_{\{p_1,q_1\},\{p_2,q_2\},\{r,s\}}\left(p_1, p_2, s\right)+P_{\{r,s\},\{r,t\},\{p_4,q_4\}}\left(r\lor\{r,s\}, t\lor\{r,t\}, p_4\right)\\&+P_{\{p_1,q_1\},\{r,t\},\{p_3,q_3\}}\left(p_1, r, p_3\right).\end{align*} Therefore, although the joint probability that $p_i$ is chosen over $q_i$ for all $i=1,\cdots,4$ is not observed directly, it is implicitly revealed through the observables. Once we have joint distributions of choice across three binary menus, we automatically get all joint choice probabilities. Hence we can identify a unique distribution of preferences.

\begin{figure}
\centering
\begin{tikzpicture}[line cap=round,line join=round,>=triangle 45,x=3.7cm,y=3.7cm]
\clip(-0.5,-0.3) rectangle (3.5,1.3);
\fill[line width=0pt,color=blue,fill=blue,fill opacity=0.1] (0.6,0.6) -- (1.2,0.6) -- (1.2,0) -- (1,0) -- (0.6,0.4) -- cycle;
\fill[line width=0pt,color=blue,fill=blue,fill opacity=0.1] (0.6,-0.25) -- (1.2,-0.25) -- (1.2,0) -- (0.6,0) -- cycle;
\fill[line width=0pt,color=ForestGreen,fill=ForestGreen,fill opacity=0.1] (0.4,0.8) -- (0.6,1) -- (0.6,1.25) -- (0,1.25) -- (0,1) -- (0.2,0.8) -- cycle;
\fill[line width=0pt,color=ForestGreen,fill=ForestGreen,fill opacity=0.1] (0,0.8) -- (-0.2,0.8) -- (-0.2,1.25) -- (0,1.25) -- cycle;
\fill[line width=0pt,color=blue,fill=blue,fill opacity=0.1] (2.6,0.6) -- (3.2,0.6) -- (3.2,0) -- (3,0) -- (2.6,0.4) -- cycle;
\fill[line width=0pt,color=blue,fill=blue,fill opacity=0.1] (2.6,0) -- (3.2,0) -- (3.2,-0.25) -- (2.6,-0.25) -- cycle;
\fill[line width=0pt,color=ForestGreen,fill=ForestGreen,fill opacity=0.1] (2.4,0.8) -- (2.6,1) -- (2.6,1.25) -- (2,1.25) -- (2,1) -- (2.2,0.8) -- cycle;
\fill[line width=0pt,color=ForestGreen,fill=ForestGreen,fill opacity=0.1] (2,0.8) -- (2,1.25) -- (1.8,1.25) -- (1.8,0.8) -- cycle;
\draw [line width=1pt,color=darkgray] (0,0)-- (1,0);
\draw [line width=1pt,color=darkgray] (1,0)-- (0,1);
\draw [line width=1pt,color=darkgray] (0,1)-- (0,0);
\draw [line width=1pt,color=darkgray] (2,0)-- (3,0);
\draw [line width=1pt,color=darkgray] (3,0)-- (2,1);
\draw [line width=1pt,color=darkgray] (2,1)-- (2,0);
\draw [line width=1pt,dash pattern=on 1pt off 1pt,color=darkgray] (-0.2,0.6)-- (1.2,0.6);
\draw [line width=1pt,dash pattern=on 1pt off 1pt,color=darkgray] (0.6,-0.25)-- (0.6,1.25);
\draw [line width=1pt,dash pattern=on 1pt off 1pt,color=darkgray] (0,0.4)-- (0.6,1);
\draw [line width=1pt,dash pattern=on 1pt off 1pt,color=darkgray] (-0.2,0.8)-- (1,0.8);
\draw [line width=1pt,dash pattern=on 1pt off 1pt,color=darkgray] (1.8,0.6)-- (3.2,0.6);
\draw [line width=1pt,dash pattern=on 1pt off 1pt,color=darkgray] (2.6,-0.25)-- (2.6,1.25);
\draw [line width=1pt,dash pattern=on 1pt off 1pt,color=darkgray] (2,0.4)-- (2.6,1);
\draw [line width=1pt,dash pattern=on 1pt off 1pt,color=darkgray] (1.8,0.8)-- (3,0.8);
\draw [line width=1pt,dash pattern=on 1pt off 1pt,color=darkgray] (2.75,-0.25)-- (2.25,1.25);
\draw [line width=1pt,dash pattern=on 1pt off 1pt,color=darkgray] (1.8,0.2)-- (3.2,0.2);
\draw (-0.5,1.2) node[anchor=north west] {(a)};
\draw (1.5,1.2) node[anchor=north west] {(b)};
\begin{scriptsize}
\draw [color=blue] (0.6,0) circle (2pt);
\draw[color=blue] (0.56,-0.05) node {$q_1$};
\draw [fill=blue] (0.6,0.4) circle (2pt);
\draw[color=blue] (0.67,0.43) node {$p_1$};
\draw [color=blue] (0,0.6) circle (2pt);
\draw[color=blue] (-0.05,0.54) node {$q_2$};
\draw [fill=blue] (0.4,0.6) circle (2pt);
\draw[color=blue] (0.4,0.54) node {$p_2$};
\draw [color=blue] (0,0.8) circle (2pt);
\draw[color=blue] (-0.05,0.74) node {$q_4$};
\draw [fill=blue] (0.2,0.8) circle (2pt);
\draw[color=blue] (0.26,0.84) node {$p_4$};
\draw [color=blue] (0,0.4) circle (2pt);
\draw[color=blue] (-0.05,0.34) node {$q_3$};
\draw [fill=blue] (0.3,0.7) circle (2pt);
\draw[color=blue] (0.24,0.7) node {$p_3$};
\draw [color=blue] (2.6,0) circle (2pt);
\draw[color=blue] (2.56,-0.05) node {$q_1$};
\draw [color=blue] (2,0.4) circle (2pt);
\draw[color=blue] (1.95,0.34) node {$q_3$};
\draw [fill=blue] (2.6,0.4) circle (2pt);
\draw[color=blue] (2.67,0.43) node {$p_1$};
\draw [color=blue] (2,0.6) circle (2pt);
\draw[color=blue] (1.95,0.54) node {$q_2$};
\draw [fill=blue] (2.4,0.6) circle (2pt);
\draw[color=blue] (2.4,0.54) node {$p_2$};
\draw [fill=blue] (2.3,0.7) circle (2pt);
\draw[color=blue] (2.24,0.7) node {$p_3$};
\draw [color=blue] (2,0.8) circle (2pt);
\draw[color=blue] (1.95,0.74) node {$q_4$};
\draw [fill=blue] (2.2,0.8) circle (2pt);
\draw[color=blue] (2.26,0.84) node {$p_4$};
\draw [fill=blue] (2.6,0.2) ++(-2pt,0 pt) -- ++(2pt,2.5pt)--++(2pt,-2pt)--++(-2pt,-2pt)--++(-2pt,2pt);
\draw[color=blue] (2.57,0.16) node {$r$};
\draw [fill=blue] (2,0.2) ++(-2pt,0 pt) -- ++(2pt,2pt)--++(2pt,-2pt)--++(-2pt,-2pt)--++(-2pt,2pt);
\draw[color=blue] (1.97,0.16) node {$s$};
\draw [fill=blue] (2.5,0.5) ++(-2pt,0 pt) -- ++(2pt,2pt)--++(2pt,-2pt)--++(-2pt,-2pt)--++(-2pt,2pt);
\draw[color=blue] (2.47,0.46) node {$t$};
\draw (2.9,0.43) node[anchor=north west] {$R_1$};
\draw (2.9,-0.08) node[anchor=north west] {$R_2$};
\draw (2.58,-0.1) node[anchor=north west] {$R_3$};
\draw (2.4,1.1) node[anchor=north west] {$R_4$};
\draw (2,1.15) node[anchor=north west] {$R_5$};
\end{scriptsize}
\end{tikzpicture}
\caption{}
\medskip
\begin{minipage}{0.75\textwidth}
{\scriptsize (a): Suppose that $\succsim$ is a weighted utility preference. Then $p_i\succ q_i$ for all $i=1,\cdots,4$ if and only if $\succsim$ rotates at some point in the blue area clockwise or rotates at some point in the green area counterclockwise. \\(b): Suppose that $\succsim$ is a weighted utility preference. Then $p_1\succ q_1$, $p_2\succ q_2$, and $s\succ r$ if and only if $\succsim$ rotates at some point in the area $R_1$; $r\succ s$, $t\succ r$, and $p_4\succ q_4$ if and only if $\succsim$ rotates at some point in the area $R_2$ or $R_5$; $p_1\succ q_1$, $r\succ t$, and $p_3\succ q_3$ if and only if $\succsim$ rotates at some point in the area $R_3$ or $R_4$.\par}
\end{minipage}
\label{RIEU5}
\end{figure}

\section{Conclusions}
This paper addresses the identification of random utility in a three-prize setting of choice under risk. When risk preferences conform to the expected utility theory, the distribution of preferences is uniquely recoverable from random choice of lotteries. But such uniqueness fails in general if risk preferences deviate from expected utility. We show by example that even if preferences are confined to a specific class of non-expected utility, such as the betweenness class or a subclass, non-uniqueness can obtain. On the other hand, unique identification is not hopeless at all. We find that if risk preferences conform to the weighted utility theory and are monotone in first-order stochastic dominance, then the distribution of preferences is again uniquely recoverable from random choice. In general, collecting data on joint choice distributions across different menus can improve identification. If preferences are suitably restricted, random joint choice across a small number of menus will suffice for unique identification. 
\par
The paper is illustrative but does not dive into any particular model of random non-expected utility. Many classes of non-expected utility are not considered in this paper, such as rank-dependent expected utility \citep{Quiggin:1982}, disappointment-averse utility \citep{Gul:1991}, or cautious expected utility \citep{Cerreia-Vioglio:2015}. A random utility modeler may choose a domain of preferences based on what describes each individual's behavior better in her belief. But she may also rather select a model with a unique identification. This paper points out a generic difficulty a modeler may face and suggests some possible solutions. Nonetheless, a comprehensive study of random non-expected utility is beyond the scope of this paper and is left for future research.

\appendix
\appendixpage
\section{Representation of Betweenness Preference}
A betweenness preference has the following representation.
\begin{prop} [Implicit Expected Utility Representation]
A preference over $\Delta$ is a betweenness preference if and only if there exists $u(\cdot,\cdot):W\times[0,1]\rightarrow\mathbb{R}$, continuous in the second argument, such that $p\succsim q\Leftrightarrow V(p)\geq V(q)$, where $V(p)$ is defined implicitly as the unique $v\in[0,1]$ that solves \begin{equation}\sum_{i=1}^{n+1}u(w_i,v)p^i=vu(\bar{w},v)+(1-v)u(\underline{w},v).\label{R}\end{equation} Furthermore, $u(w,v)$ is unique up to positive affine transformations which are continuous in $v$. A particular transformation exists setting $u(\underline{w},v)=0$ and $u(\bar{w},v)=1$ for all $v\in[0,1]$.
\end{prop}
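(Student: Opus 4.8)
The plan is to prove the equivalence by constructing the representation directly from the preference and then checking that any representation of this form is a betweenness preference, treating necessity as the substantive direction. For necessity I would first build the utility index $V$ from certainty equivalents along the best--worst segment. Since $\bar{w}$ and $\underline{w}$ are the best and worst elements (condition~2) with $\bar{w}\succ\underline{w}$, betweenness (condition~4) makes $v\mapsto v\bar{w}+(1-v)\underline{w}$ strictly increasing in preference on $[0,1]$; combined with the solvability condition~3 and completeness, every $p\in\Delta$ satisfies $p\sim v\bar{w}+(1-v)\underline{w}$ for a unique $v$, which I define to be $V(p)$. By construction $V$ represents $\succsim$, with $V(\bar{w})=1$ and $V(\underline{w})=0$.

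The crux is a geometric lemma: each indifference set $I_v:=\{p:V(p)=v\}$ lies in a hyperplane. I would show that betweenness forces both the upper and lower contour sets of any lottery to be convex (if $p_1\succsim p_2\succsim r$, ordering the mixands and applying betweenness gives $\alpha p_1+(1-\alpha)p_2\succsim p_2\succsim r$, and symmetrically for lower contours). Two convex sets that cover $\Delta$ have disjoint interiors, hence are separated by a hyperplane that must contain their common boundary $I_v$, so $I_v=\{p:L_v(p)=c_v\}\cap\Delta$ for a linear functional $L_v$. Writing $L_v(p)=\sum_i u(w_i,v)p^i$ with $u(w_i,v):=L_v(w_i)$ (the value at the $i$-th vertex) and normalizing $u(\underline{w},v)=0$, $u(\bar{w},v)=1$ (possible because $\bar{w}\succ\underline{w}$ fixes orientation and scale), the reference lottery $v\bar{w}+(1-v)\underline{w}$ has $L_v$-value exactly $v=vu(\bar{w},v)+(1-v)u(\underline{w},v)$. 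Hence $p\in I_v$ is equivalent to the displayed equation~(\ref{R}). Because each $p$ belongs to exactly one indifference set, the equation has the unique root $v=V(p)$, so the implicit representation comes for free.

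For sufficiency I would run the construction backwards: given $u$ continuous in its second argument, define $V(p)$ as the implicit root, establish existence and uniqueness via continuity in $v$ and the boundary evaluations $V(\bar{w})=1$, $V(\underline{w})=0$, and verify that the level sets of $V$ are the affine slices $I_v$. This makes every indifference set a hyperplane intersection and delivers betweenness directly from linearity of $L_v$ along $I_v$ together with monotonicity across $v$. The ``furthermore'' uniqueness claim then follows because, for each fixed $v$, the hyperplane $H_v$ determines $L_v$ only up to a positive affine transformation, and the normalization at $\underline{w}$ and $\bar{w}$ removes precisely this degree of freedom, leaving $u(\cdot,v)$ unique.

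The main obstacle I anticipate is continuity of $u$ in $v$. I must show that the hyperplane map $v\mapsto H_v$, equivalently the normalized functional $L_v$, varies continuously, which amounts to proving that $V$ is continuous. I would derive continuity of $V$ from the Archimedean solvability condition~3 together with betweenness, and then argue that a continuously moving family of indifference hyperplanes yields coefficients $u(w_i,v)=L_v(w_i)$ continuous in $v$. Keeping the implicit solution well defined and single valued throughout this limiting argument, and ruling out degeneracies where $L_v$ fails to separate $\bar{w}$ from $\underline{w}$, is the delicate point.
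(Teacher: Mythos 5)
The first thing to note is that the paper does not actually prove this proposition: its entire ``proof'' is a one-line citation to Dekel (1986, Proposition A.1). So there is no in-paper argument to compare against, and what your sketch amounts to is a reconstruction of the classical argument behind that citation. Your architecture is indeed the standard one: define $V(p)$ as the weight $v$ with $p\sim v\bar{w}+(1-v)\underline{w}$ (using solvability and betweenness along the $\underline{w}$--$\bar{w}$ segment), show betweenness makes upper and lower contour sets convex, extract a linear functional $L_v$ whose level set carries $I_v$, and use the normalization $u(\underline{w},v)=0$, $u(\bar{w},v)=1$ to kill the affine freedom; and you are right that continuity of $v\mapsto L_v$ (equivalently of $V$) is where the genuine analytic work sits.

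Two steps in your necessity argument are, however, stated more strongly than you can justify and need patching. First, ``two convex sets that cover $\Delta$ have disjoint interiors'' is false as a general claim (both sets could be all of $\Delta$). What you actually need is that $I_v$ has empty relative interior, and that is a separate consequence of betweenness plus nondegeneracy: if $q$ were relatively interior to $I_v$ and $\bar{w}\succ q$, then for small $\epsilon>0$ the point $r=q+\epsilon(q-\bar{w})$ lies in the indifferent neighborhood of $q$, yet $q$ is a strict mixture of $\bar{w}$ and $r$, so betweenness forces $q\succ r$, a contradiction (use the ray away from $\underline{w}$ when the first ray exits $\Delta$). Only after this does the separation theorem apply to the contour sets. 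Second, separation gives only $I_v\subseteq H_v\cap\Delta$, whereas both the representation and your ``the equation has the unique root $v=V(p)$'' claim require equality $I_v=H_v\cap\Delta$: without it, $p$ could satisfy equation~(\ref{R}) at some $v'\neq V(p)$. Equality needs the solvability axiom: if $x\in H_v\cap\Delta$ with $x\succ p$, pick $y\prec p$ off $H_v$, solve for a mixture of $x$ and $y$ indifferent to $p$; that mixture lies in $H_v$, forcing $y\in H_v$, a contradiction. (Relatedly, condition 3 as printed in the paper, $\alpha p+(1-\alpha)q\sim r$, is inconsistent with betweenness and must be read as $\alpha p+(1-\alpha)r\sim q$; your sketch implicitly uses the corrected form.) With these repairs, and the continuity argument you already flagged as delicate, your outline goes through and is essentially Dekel's own proof.
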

\begin{proof}
See \citet[Proposition A.1]{Dekel:1986}.
\end{proof}

\section{Behavioral Properties of RIEU}
For any $D,D'\in\mathcal{D}$ and $\lambda,\lambda'\geq0$, let $\lambda D+\lambda'D'\coloneqq\{\lambda x+\lambda'y:x\in D,y\in D'\}$. Note that $\lambda D+\lambda'D'$ is also a menu.
\par
For any convex set $C$, a convex set $F\subset C$ is called a {\it face} of $C$ if for all $x,y\in C$ and $\lambda\in(0,1)$, $$\lambda x+(1-\lambda)y\in F\Rightarrow\{x,y\}\subset F.$$
For any set $A$, let $chA$ denote the convex hull of $A$. For any $D,A\in\mathcal{D}$, we say that $A$ is a face of $D$ if $chA$ is a face of $chD$ and $chA\cap D=A$.
\par
\begin{axm}
Monotonicity: $\rho_D(A)\leq\rho_{D\setminus B}(A\setminus B)$ for all $D,A,B\in\mathcal{D}$ with $A\setminus B\neq\emptyset$.
\end{axm}
Monotonicity captures the intuition that the probability of lottery $p$ being optimal does not decrease as some lotteries are removed from the menu. This property holds under any random utility model.
\begin{axm}
Extremeness: $\rho_D(A)>0$ implies that $A$ is a face of $D$, for all $A,D\in\mathcal{D}$.
\end{axm}
Extremeness states that lotteries $p$ and $q$ are both optimal whenever a mixture of them is optimal. This property reflects the fact that each indifference set of a betweenness preference is linear.
\begin{axm}
Stochastic Betweenness: $\rho_{\lambda D+(1-\lambda)p}(\lambda A+(1-\lambda)p)=\rho_D(A)$ for all $p\in A\subset D\in\mathcal{D}$, $\lambda\in(0,1)$.
\end{axm}
Stochastic Betweenness states that the probability of lottery $p$ being optimal remains unchanged when each other lottery $q$ in the menu is replaced by a mixture of $p$ and $q$. This property captures the betweenness axiom, which requires that a mixture of two lotteries should lie in between them in preference.

\begin{prop}
If $\rho$ is rationalized by RIEU, then $\rho$ satisfies Monotonicity, Extremeness, and Stochastic Betweenness.
\end{prop}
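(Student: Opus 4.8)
The plan is to prove each of the three properties by first establishing the corresponding statement \emph{pointwise}, i.e.\ for a single betweenness preference $\succsim$ and its optimal-set map $M(\cdot,\succsim)$, and then lifting it to $\rho$ through the identity $\rho_D(A)=\mu(N(D,A))$ together with the monotonicity and finite additivity of the probability measure $\mu$ on $\mathcal{F}(\mathcal{C})$. Throughout I will use that each $N(D,A)\in\mathcal{C}\subseteq\mathcal{F}(\mathcal{C})$ is $\mu$-measurable and that $\mu(\emptyset)=0$.

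For Monotonicity I would show the event inclusion $N(D,A)\subseteq N(D\setminus B,A\setminus B)$ whenever $A\setminus B\neq\emptyset$. Fix $\succsim\in N(D,A)$, so $A=M(D,\succsim)$; all elements of $A$ attain the common maximal value over $D$, while every element of $D\setminus A$ is strictly worse. Deleting $B$ cannot raise the maximum and leaves the nonempty set $A\setminus B$ attaining it, so $M(D\setminus B,\succsim)=A\setminus B$, which gives the inclusion. Monotonicity of $\mu$ then yields $\rho_D(A)=\mu(N(D,A))\leq\mu(N(D\setminus B,A\setminus B))=\rho_{D\setminus B}(A\setminus B)$. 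This step is routine.

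For Extremeness I would argue the contrapositive: if $A$ is not a face of $D$ then $N(D,A)=\emptyset$, whence $\rho_D(A)=0$. The crux is that for every betweenness preference $\succsim$, the optimal set $M(D,\succsim)$ is a face of $D$. Let $p^{*}\in M(D,\succsim)$ and let $H$ be the hyperplane carrying the indifference set through $p^{*}$; by the stated property of betweenness preferences the upper contour set of $p^{*}$ is $\Delta\cap H^{+}$, so $D$ lies in the closed half-space $H^{-}$ and $H$ supports $chD$. Standard convex geometry then gives that $chD\cap H$ is an exposed face of $chD$, and because $H$ supports $chD$ one checks $ch(D\cap H)=chD\cap H$ (any convex combination of points of $D$ lying on $H$ forces each point given positive weight onto $H$). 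Since $M(D,\succsim)=D\cap H$, it follows that $ch\,M(D,\succsim)=chD\cap H$ is a face of $chD$ and $ch\,M(D,\succsim)\cap D=M(D,\succsim)$, i.e.\ $M(D,\succsim)$ is a face of $D$. Hence a non-face $A$ has $N(D,A)=\emptyset$. The only real content here is the half-space characterization of upper contour sets (supplied by the definition) together with the elementary convex-geometry lemma.

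For Stochastic Betweenness I would prove the \emph{event equality} $N(D,A)=N(D',A')$, where $D'\coloneqq\lambda D+(1-\lambda)p$ and $A'\coloneqq\lambda A+(1-\lambda)p$ with $p\in A\subseteq D$ and $\lambda\in(0,1)$; equal measure then gives the claim. The map $q\mapsto\lambda q+(1-\lambda)p$ is an injection of $D$ onto $D'$. For the forward inclusion, fix $\succsim\in N(D,A)$: if $q\in A$ then $q\sim p$ and the indifference clause of Betweenness gives $\lambda q+(1-\lambda)p\sim p$, while if $q\in D\setminus A$ then $p\succ q$ and the strict clause gives $\lambda q+(1-\lambda)p\prec p$; since $p\in A'$, this identifies $M(D',\succsim)=A'$. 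The reverse inclusion is the main obstacle, because recovering $D$ from $D'$ is an \emph{expansion} (scaling by $1/\lambda>1$) to which the Betweenness axiom does not apply directly. I would instead argue by contradiction using only mixtures that stay inside the menus: from $M(D',\succsim)=A'$ and $p\in A'$ one has $p\succsim\lambda q+(1-\lambda)p$ for all $q\in D$; were $q\succ p$ for some $q$, Betweenness would force $\lambda q+(1-\lambda)p\succ p$, a contradiction, so $p\in M(D,\succsim)$, and comparing the strict-versus-indifferent status of $\lambda q+(1-\lambda)p$ against $p$ pins down $M(D,\succsim)=A$ exactly. Combining both inclusions gives $N(D,A)=N(D',A')$ and hence $\rho_{D'}(A')=\rho_D(A)$.
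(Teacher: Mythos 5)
Your proposal is correct, and its overall architecture is the same as the paper's: each property is established pointwise for the optimal-set map $M(\cdot,\succsim)$ of a single betweenness preference and then lifted to $\rho$ via $\rho_D(A)=\mu(N(D,A))$, with an identical event inclusion for Monotonicity and an identical event equality for Stochastic Betweenness. The differences are local but worth noting. For Extremeness, the paper argues axiomatically: it first shows $chA=M(chD,\succsim)$ and then verifies the face property by applying the betweenness axiom to a strict mixture $p=\lambda q+(1-\lambda)r$ (if $p\succ q$ or $p\succ r$, betweenness would force $p\succ p$, a contradiction); you instead treat the hyperplane $H$ carrying the indifference set of an optimal lottery as a supporting hyperplane of $chD$, so that $M(D,\succsim)=D\cap H$ and $chD\cap H$ is an exposed face, with $ch(D\cap H)=chD\cap H$. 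Both routes rest on the linearity of betweenness indifference sets; the paper's is self-contained in the axioms of Definition 3, while yours imports the half-space characterization stated in Section 2.2 plus a standard convex-geometry fact, making the geometric content more transparent. For Stochastic Betweenness, the paper proves only the inclusion $N(D,A)\subset N(\lambda D+(1-\lambda)p,\lambda A+(1-\lambda)p)$ and asserts that ``the argument can be reversed''; you correctly flag that the reverse direction is not symmetric---betweenness constrains contractions toward $p$, not expansions---and you supply the missing step (if $q\succ p$ then $\lambda q+(1-\lambda)p\succ p$, contradicting optimality of $p$ in the mixed menu, after which the strict-versus-indifferent status of each $\lambda q+(1-\lambda)p$ and injectivity of $q\mapsto\lambda q+(1-\lambda)p$ pin down $M(D,\succsim)=A$). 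On that step your write-up is more careful than the paper's.
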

\begin{proof}
Suppose that $\rho$ is rationalized by RIEU $\mu$.
\par
Monotonicity: If $A=M(D,\succsim)$, then $A\setminus B=M(D\setminus B,\succsim)$. Thus $N(D,A)\subset N(D\setminus B,A\setminus B)$. It follows that $\rho_D(A)=\mu(N(D,A))\leq\mu(N(D\setminus B,A\setminus B))=\rho_{D\setminus B}(A\setminus B)$. Hence $\rho$ satisfies Monotonicity.
\par
Extremeness: Suppose that $A=M(D,\succsim)$, where $\succsim$ is a betweenness preference. Betweenness property implies that $p\sim p'$ for all $p,p'\in chA$, and $p\succ p''$ for all $p\in chA$ and $p''\in chD\setminus ch A$. Thus $chA=M(chD,\succsim)$. Suppose that $p\in chA$ and $p=\lambda q+(1-\lambda)r$ where $\lambda\in(0,1)$ and $\{q,r\}\subset chD$. Since $p$ is optimal, $p\succsim q$ and $p\succsim r$. If $p\succ q$ or $p\succ r$, then by betweenness property, $p\succ\lambda q+(1-\lambda)r=p$, a contradiction. Hence $p\sim q\sim r$, implying that $q$ and $r$ are both in $ch A$. Thus $chA$ is a face of $chD$. Since $A=M(D,\succsim)$, $chA\cap D=A$. Thus $A$ is a face of $D$ and so $\rho$ satisfies Extremeness.
\par
Stochastic Betweenness: Suppose that $p\in A=M(D,\succsim)$, where $\succsim$ is a betweenness preference.  Then for all $q\in A$ and $r\in D\setminus A$, $p\sim q$ and $p\succ r$. By betweenness property, $p\sim\lambda q+(1-\lambda)p$ and $p\succ\lambda r+(1-\lambda)p$ for all $\lambda\in(0,1)$. Hence $N(D,A)\subset N(\lambda D+(1-\lambda)p,\lambda A+(1-\lambda)p)$. The argument can be reversed to conclude that $N(D,A)\supset N(\lambda D+(1-\lambda)p,\lambda A+(1-\lambda)p)$. Thus $\rho_{\lambda D+(1-\lambda)p}(\lambda A+(1-\lambda)p)=\rho_D(A)$. So $\rho$ satisfies Stochastic Betweenness.
\end{proof}

\section{Proof of Proposition 4}
We want to show that if two random weighted utilities $\mu$ and $\mu'$ agree on the class $$\mathcal{E}_0\coloneqq\{\cap_{i=1}^3N(\{p_i,q_i\},A_i):A_i\subset\{p_i,q_i\}\subset\Delta\ \forall\ i=1,\cdots,3\},$$ then they also agree on the class $$\mathcal{E}\coloneqq\{\cap_{i=1}^kN(\{p_i,q_i\},A_i):A_i\subset\{p_i,q_i\}\subset\Delta\ \forall\ i=1,\cdots,k; k\geq 1\}.$$ This would be true if any $E\in\mathcal{E}$ can be expressed as $E=\cup_{m=1}^ME_m$, where $E_m\in\mathcal{E}_0$ for all $m$, and $E_m\cap E_{m'}=\emptyset$ for all $m\neq m'$.
\par
Consider four pairs of lotteries, $(p_i,q_i)$ for $i=1,\cdots,4$. Let $L_i$ denote the line $\overleftrightarrow{p_iq_i}$. Each $L_i$ divides $\mathbb{R}^2$ into two half spaces, $H_i^+$ and $H_i^-$. Without loss of generality, assume that $p_i\succsim q_i$ if $\succsim$ rotates at some point in $H_i^+$ clockwise or rotates at some point in $H_i^-$ counterclockwise.
\par
We want to show that the set $\{\succsim:p_i\succsim q_i\ \forall\ i=1\cdots,4\}$ can be decomposed into finitely many mutually disjoint subsets. Moreover, each subset takes the form $\{\succsim:r_j\succ s_j\lor r_j\sim s_j\ \forall\ j=1\cdots,3\}$. By induction, we can extend the claim to the case of more than four pairs of lotteries.
\par
It suffices to show the following: There exist $r_{jk},s_{jk}$ for $j\in\{1,2,3\}$ and $k\in\{1,\cdots,K\}$ such that $$\cap_{i=1}^4\{\succsim:p_i\succsim q_i\}=\cup_{k=1}^K\left(\cap_{j=1}^3\{\succsim:r_{jk}\succsim s_{jk}\}\right),$$ and, for all $k\neq k'$, $$\left(\cap_{j=1}^3\{\succsim:r_{jk}\succsim s_{jk}\}\right)\cap\left(\cap_{j=1}^3\{\succsim:r_{jk'}\succsim s_{jk'}\}\right)\subset\{\succsim:r\sim s\}$$ for some $r,s\in\Delta$.
\par
If $L_i=L_j$ for some $i\neq j$, then the claim is trivial. Now assume that $L_i\neq L_j$ for all $i\neq j$. Without loss of generality, assume that $\cap_{i=1}^4H_i^+\neq\emptyset$.
\par
\noindent {\bf Case 1:} $\cap_{i=1}^4H_i^+$ has only 1 one-dimensional face. That is, it is a half space.
\par
Without loss of generality, assume that $H_1^+\subset\cdots\subset H_4^+$. Then $H_1^-\supset\cdots\supset H_4^-$. Suppose that $p_1\succsim q_1$ and $p_4\succsim q_4$. Then $\succsim$ either rotates at some point in $H_1^+$ clockwise or rotates at some point in $H_4^-$ counterclockwise. Thus $p_2\succsim q_2$ and $p_3\succsim q_3$. We have $$\cap_{i=1}^4\{\succsim:p_i\succsim q_i\}=\{\succsim:p_1\succsim q_1\}\cap\{\succsim:p_4\succsim q_4\}.$$
\par
\noindent {\bf Case 2:} $\cap_{i=1}^4H_i^+$ has two one-dimensional faces. Without loss of generality, assume that $L_1\nparallel L_2$ and $\cap_{i=1}^4H_i^+=H_1^+\cap H_2^+$. 
\par
Case 2-1: $L_1\parallel L_2$. Then $H_1^-\cap H_2^-=\emptyset$. If $p_1\succsim q_1$ and $p_2\succsim q_2$, then $\succsim$ rotates at some point in $H_1^+\cap H_2^+$ clockwise. Thus $p_3\succsim q_3$ and $p_4\succsim q_4$. We have $$\cap_{i=1}^4\{\succsim:p_i\succsim q_i\}=\{\succsim:p_1\succsim q_1\}\cap\{\succsim:p_2\succsim q_2\}.$$
\par
Case 2-2: $L_1\nparallel L_2$ and $L_3\parallel L_4$. Without loss of generality, assume $H_3^+\subset H_4^+$. Then $\cap_{i=1}^4H_i^+=H_1^+\cap H_2^+\cap H_4^+$ and $\cap_{i=1}^4H_i^-=H_1^-\cap H_2^-\cap H_4^-$. Thus $$\cap_{i=1}^4\{\succsim:p_i\succsim q_i\}=\{\succsim:p_1\succsim q_1\}\cap\{\succsim:p_2\succsim q_2\}\cap\{\succsim:p_4\succsim q_4\}.$$
\par
Case 2-3: $L_1\nparallel L_2$, $L_3\nparallel L_4$, and $L_1\cap L_2\subset\Delta$. 
If $L_3\cap L_4\not\subset int(H_1^-\cap H_2^-)$, then either $L_3$ or $L_4$ is redundant. That is, without loss of generality, $\cap_{i=1}^4H_i^+=H_1^+\cap H_2^+\cap H_3^+$ and $\cap_{i=1}^4H_i^-=H_1^-\cap H_2^-\cap H_3^-$. Thus $$\cap_{i=1}^4\{\succsim:p_i\succsim q_i\}=\{\succsim:p_1\succsim q_1\}\cap\{\succsim:p_2\succsim q_2\}\cap\{\succsim:p_3\succsim q_3\}.$$
\par
Now, consider $L_3\cap L_4\subset int(H_1^-\cap H_2^-)$. Let $r\in\Delta$ denote the intersection of $L_1$ and $L_2$. Pick $s\in\Delta$ such that $\overleftrightarrow{rs}$ passes $L_3\cap L_4$. Note that $\overleftrightarrow{rs}$ divides $H_1^+\cap H_2^+$ into two fans. Let $H_5^+$ and $H_5^-$ denote the half spaces generated by $\overleftrightarrow{rs}$ such that $H_1^+\cap H_2^+=(H_1^+\cap H_5^+)\cup(H_5^-\cap H_2^+)$. Without loss of generality, assume that $r\succsim s$ if $\succsim$ rotates at some point in $H_5^+$ clockwise and that $H_5^-\cap H_3^-\subset H_5^-\cap H_4^-$. Then, \begin{align*}\left(\cap_{i=1}^4H_i^+\right)\cup\left(\cap_{i=1}^4H_i^-\right)=&\left[(H_1^+\cap H_3^+\cap H_5^+)\cup(H_1^-\cap H_3^-\cap H_5^-)\right]\\&\cup\left[(H_2^+\cap H_4^+\cap H_5^-)\cup(H_2^-\cap H_4^-\cap H_5^+)\right].\end{align*} This implies that $$\cap_{i=1}^4\{p_i\succsim q_i\}=\left(\{p_1\succsim q_1\}\cap\{p_3\succsim q_3\}\cap\{r\succsim s\}\right)\cup\left(\{p_2\succsim q_2\}\cap\{p_4\succsim q_4\}\cap\{r\precsim s\}\right)$$
\par
Case 2-4: $L_1\nparallel L_2$, $L_3\nparallel L_4$, and $L_1\cap L_2\not\subset int\Delta$. Without loss of generality, assume that $\{p_2,q_2\}\subset H_1^+$. As in the previous case, we shall consider $L_3\cap L_4\subset int(H_1^-\cap H_2^-)$.
\par
Pick a lottery $r$ that is a strict mixture of $p_2$ and $q_2$. Pick $s_1\in\Delta$ such that $\overleftrightarrow{rs_1}\parallel L_1$. Let $H_5^+$ and $H_5^-$ denote the half spaces generated by $\overleftrightarrow{rs_1}$ such that $H_5^+\subset H_1^+$. Without loss of generality, assume that $r\succsim s_1$ if $\succsim$ rotates at some point in $H_5^+$ clockwise. 
\par
Pick $s_2\in\Delta$ such that $\overleftrightarrow{rs_2}$ passes $L_3\cap L_4$. Let $H_6^+$ and $H_6^-$ denote the half spaces generated by $\overleftrightarrow{rs_2}$ such that $H_5^+\cap H_2^+=(H_5^+\cap H_6^+)\cup(H_6^-\cap H_2^+)$. Without loss of generality, assume that $r\succsim s_2$ if $\succsim$ rotates at some point in $H_6^-$ clockwise and that $H_6^+\cap H_4^-\subset H_6^+\cap H_3^-$.
\par
Then \begin{align*}\left(\cap_{i=1}^4H_i^+\right)\cup\left(\cap_{i=1}^4H_i^-\right)=&\left[(H_1^+\cap H_2^+\cap H_5^-)\cup(H_1^-\cap H_2^-\cap H_5^+)\right]\\&\cup\left[(H_2^+\cap H_4^+\cap H_6^-)\cup(H_2^-\cap H_4^-\cap H_6^+)\right]\\&\cup\left[(H_1^+\cap H_3^+\cap H_5^+\cap H_6^+)\cup(H_1^-\cap H_3^-\cap H_5^-\cap H_6^-)\right].\end{align*} This implies that \begin{align*}\cap_{i=1}^4\{p_i\succsim q_i\}=&\left(\{p_1\succsim q_1\}\cap\{p_2\succsim q_2\}\cap\{r\precsim s_1\}\right)\\&\cup\left(\{p_2\succsim q_2\}\cap\{p_4\succsim q_4\}\cap\{r\succsim s_2\}\right)\\&\cup\left(\{p_1\succ q_1\}\cap\{p_3\cap q_3\}\cap\{r\succsim s_1\}\cap\{r\precsim s_2\}\right)\end{align*} Note that $\{p_1\succ q_1\}\cap\{p_3\cap q_3\}\cap\{r\succsim s_1\}\cap\{r\precsim s_2\}$ can be decomposed as in Case 2-3.
\par
\noindent{\bf Case 3:} $\cap_{i=1}^4H_i^+$ has three one-dimensional faces. Without loss of generality, assume that $\cap_{i=1}^4H_i^+=H_1^+\cap H_2^+\cap H_3^+$, $L_1\nparallel L_2$, and $L_2\nparallel L_3$.
\par
Case 3-1: $L_1\parallel L_3$ or $\cap_{i=1}^3H_i^+$ is bounded. Note that $\cap_{i=1}^3H_i^-=\emptyset$. Thus $$\cap_{i=1}^4\{p_i\succsim q_i\}=\{p_1\succsim q_1\}\cap\{p_2\succsim q_2\}\cap\{p_3\succsim q_3\}.$$
\par
Case 3-2: $L_1\nparallel L_3$, $L_1\cap L_3\subset H_2^-\cap H_4^-$. Note that $H_1^-\cap H_3^-\subset H_4^-$. Thus, $$\left(\cap_{i=1}^4H_i^+\right)\cup\left(\cap_{i=1}^4H_i^-\right)=\left(\cap_{i=1}^3H_i^+\right)\cup\left(\cap_{i=1}^3H_i^-\right).$$ Therefore, $$\cap_{i=1}^4\{p_i\succsim q_i\}=\{p_1\succsim q_1\}\cap\{p_2\succsim q_2\}\cap\{p_3\succsim q_3\}.$$
\par
Case 3-3: $L_1\nparallel L_3$, $L_1\cap L_3\subset H_2^-\cap intH_4^+$, and $[(L_1\cap L_2)\cup(L_2\cap L_3)]\cap \Delta\neq\emptyset$.
\par
Without loss of generality, assume that $L_2\cap L_3\equiv\{r\}\subset int\Delta$. Pick $s\in\Delta$ such that $\overleftrightarrow{rs}$ passes the intersection of $L_1$ and $L_4$. Let $H_5^+$ and $H_5^-$ denote the half spaces generated by $\overleftrightarrow{rs}$ such that $\cap_{i=1}^3H_i^+=(H_1^+\cap H_2^+\cap H_5^+)\cup(H_5^-\cap H_3^+)$. Without loss of generality, assume that $r\succsim s$ if $\succsim$ rotates at some point in $H_5^-$ clockwise. Then \begin{align*}\left(\cap_{i=1}^4H_i^+\right)\cup\left(\cap_{i=1}^4H_i^-\right)=&\left[\left(H_1^+\cap H_2^+\cap H_5^+\right)\cup\left(H_1^-\cap H_2^-\cap H_5^-\right)\right]\\&\cup\left[\left(H_3^+\cap H_4^+\cap H_5^-\right)\cup\left(H_3^-\cap H_4^-\cap H_5^+\right)\right].\end{align*} This implies that \begin{align*}\cap_{i=1}^4\{p_i\succsim q_i\}=&\left(\{p_1\succsim q_1\}\cap\{p_2\succsim q_2\}\cap\{s\succsim r\}\right)\\&\cup\left(\{p_3\succsim q_3\}\cap\{p_4\succsim q_4\}\cap\{r\succsim s\}\right).\end{align*}
\par
Case 3-4: $L_1\nparallel L_3$, $L_1\cap L_3\subset H_2^-\cap intH_4^+$, and $[(L_1\cap L_2)\cup(L_2\cap L_3)]\cap \Delta=\emptyset$.
\par
Without loss of generality, assume that $\{p_2,q_2\}\subset H_3^-$. Let $L_5$ denote the line passing $L_2\cap L_3$ and parallel to $L_1$. Note that $L_5$ must intersect $\Delta$ at two points, say $r$ and $s$. One may pick $r$ as a mixture of $p_2$ and $p_1$, $s$ as a mixture of $p_2$ and $q_1$.  Let $H_5^+$ and $H_5^-$ denote the half spaces generated by $L_5$ such that $\cap_{i=1}^3H_i^+=(H_1^+\cap H_2^+\cap H_5^+)\cup(H_5^-\cap H_3^+)$. Without loss of generality, assume that $r\succsim s$ if $\succsim$ rotates at some point in $H_5^-$ clockwise. Then \begin{align*}\left(\cap_{i=1}^4H_i^+\right)\cup\left(\cap_{i=1}^4H_i^-\right)=&\left[\left(H_1^+\cap H_2^+\cap H_5^+\right)\cup\left(H_1^-\cap H_2^-\cap H_5^-\right)\right]\\&\cup\left[\left(H_1^+\cap H_3^+\cap H_4^+\cap H_5^-\right)\cup\left(H_1^-\cap H_3^-\cap H_4^-\cap H_5^+\right)\right].\end{align*} Therefore, \begin{align*}\cap_{i=1}^4\{p_i\succsim q_i\}=&\left(\{p_1\succsim q_1\}\cap\{p_2\succsim q_2\}\cap\{s\succsim r\}\right)\\&\cup\left(\{p_1\succsim q_1\}\cap\{p_3\succsim q_3\}\cap\{p_4\succsim q_4\}\cap\{r\succsim s\}\right).\end{align*} Note that $\{p_1\succsim q_1\}\cap\{p_3\succsim q_3\}\cap\{p_4\succsim q_4\}\cap\{r\succsim s\}$ can be decomposed as in Case 2-4.
\par
\noindent{\bf Case 4:} $\cap_{i=1}^4H_i^+$ has four one-dimensional faces. Without loss of generality, assume that  $L_1\nparallel L_2$; $L_2\nparallel L_3$ and $L_2\cap L_3\subset H_1^+$; $L_3\nparallel L_4$ and $L_3\cap L_4\subset H_2^+$.
\par
Case 4-1: $\cap_{i=1}^4H_i^+$ is bounded; that is, it is a convex polygon with four edges.  Note that, for at least one diagonal, the line containing it intersects $\Delta$ at more than one point. Without loss of generality, pick $r,s\in\Delta$ such that $\overleftrightarrow{rs}$ passes $L_2\cap L_3$ and $L_1\cap L_4$. Let $H_5^+$ and $H_5^-$ denote the half spaces generated by $\overleftrightarrow{rs}$ such that $L_3\cap L_4\subset H_5^+$. Without loss of generality, assume that $r\succsim s$ if $\succsim$ rotates at some point in $H_5^+$ clockwise. Then \begin{align*}\left(\cap_{i=1}^4H_i^+\right)\cup\left(\cap_{i=1}^4H_i^-\right)=&\left[\left(H_1^+\cap H_2^+\cap H_5^-\right)\cup\left(H_1^-\cap H_2^-\cap H_5^+\right)\right]\\&\cup\left[\left(H_3^+\cap H_4^+\cap H_5^+\right)\cup\left(H_3^-\cap H_4^-\cap H_5^-\right)\right].\end{align*} This implies that \begin{align*}\cap_{i=1}^4\{p_i\succsim q_i\}=&\left(\{p_1\succsim q_1\}\cap\{p_2\succsim q_2\}\cap\{s\succsim r\}\right)\\&\cup\left(\{p_3\succsim q_3\}\cap\{p_4\succsim q_4\}\cap\{r\succsim s\}\right).\end{align*}
\par
Case 4-2: $L_1\parallel L_4$. Let $L_5$ be the line that passes $L_2\cap L_3$ and is parallel to $L_1$ and $L_4$. Note that $L_5$ intersects $\Delta$ at more than one point. One may pick $r\in L_5$ as a mixture of $p_1$ and $p_4$, $s\in L_5$ as a mixture of $p_1$ and $q_4$. Then follow the same argument as in Case 4-1.
\par
Case 4-3: $L_1\cap L_4\subset H_2^-\cap H_3^-$. Note that $\cap_{i=1}^4H_i^-=H_1^-\cap H_4^-$, i.e., $\cap_{i=1}^4H_i^-$ has only two one-dimensional faces. We go back to Case 2.

\linespread{1}\selectfont
\bibliographystyle{apa}
\bibliography{RNEU}

\begin{thebibliography}{}

\bibitem[\protect\astroncite{Barber\'{a} and Pattanaik}{1986}]{Barbera:1986}
Barber\'{a}, S. and Pattanaik, P.~K. (1986).
\newblock Falmagne and the rationalizability of stochastic choices in terms of
  random orderings.
\newblock {\em Econometrica}, 54(3):707--715.

\bibitem[\protect\astroncite{Becker}{1962}]{Becker:1962}
Becker, G.~S. (1962).
\newblock Irrational behavior and economic theory.
\newblock {\em Journal of Political Economy}, 70(1):1--13.

\bibitem[\protect\astroncite{Cerreia-Vioglio
  et~al.}{2015}]{Cerreia-Vioglio:2015}
Cerreia-Vioglio, S., Dillenberger, D., and Ortoleva, P. (2015).
\newblock Cautious expected utility and the certainty effect.
\newblock {\em Econometrica}, 83(2):693--728.

\bibitem[\protect\astroncite{Chew}{1983}]{Chew:1983}
Chew, S.~H. (1983).
\newblock A generalization of the quasilinear mean with applications to the
  measurement of income inequality and decision theory resolving the allais
  paradox.
\newblock {\em Econometrica}, 51(4):1065--1092.

\bibitem[\protect\astroncite{Chew}{1989}]{Chew:1989}
Chew, S.~H. (1989).
\newblock Axiomatic utility theories with the betweenness property.
\newblock {\em Annals of Operations Research}, 19:273--298.

\bibitem[\protect\astroncite{Dekel}{1986}]{Dekel:1986}
Dekel, E. (1986).
\newblock An axiomatic characterization of preferences under uncertainty:
  Weakening the independence axiom.
\newblock {\em Journal of Economic Theory}, 40(2):304--318.

\bibitem[\protect\astroncite{Grandmont}{1992}]{Grandmont:1992}
Grandmont, J.-M. (1992).
\newblock Transformations of the commodity space, behavioral heterogeneity, and
  the aggregation problem.
\newblock {\em Journal of Economic Theory}, 57(1):1 -- 35.

\bibitem[\protect\astroncite{Gul}{1991}]{Gul:1991}
Gul, F. (1991).
\newblock A theory of disappointment aversion.
\newblock {\em Econometrica}, 59(3):667--686.

\bibitem[\protect\astroncite{Gul and Pesendorfer}{2006}]{Gul:2006}
Gul, F. and Pesendorfer, W. (2006).
\newblock Random expected utility.
\newblock {\em Econometrica}, 74(1):121--146.

\bibitem[\protect\astroncite{Kleiber and Stoyanov}{2013}]{Kleiber:2013}
Kleiber, C. and Stoyanov, J. (2013).
\newblock Multivariate distributions and the moment problem.
\newblock {\em Journal of Multivariate Analysis}, 113:7 -- 18.

\bibitem[\protect\astroncite{Lu}{2020}]{Lu:2020}
Lu, J. (2020).
\newblock Random ambiguity.
\newblock {\em Theoretical Economics (forthcoming)}.

\bibitem[\protect\astroncite{Quiggin}{1982}]{Quiggin:1982}
Quiggin, J. (1982).
\newblock A theory of anticipated utility.
\newblock {\em Journal of Economic Behavior \& Organization}, 3(4):323 -- 343.

\bibitem[\protect\astroncite{Rao and Rao}{1983}]{Rao:1983}
Rao, K.~B. and Rao, M.~B. (1983).
\newblock {\em Theory of Charges}.
\newblock Academic Press.

\end{thebibliography}
\end{NoHyper}\end{document}